\documentclass[11pt,english]{article}
\usepackage[T1]{fontenc}

\usepackage{graphicx, subfigure}
\expandafter\def\csname ver@subfig.sty\endcsname{}
\usepackage{subcaption}

\usepackage[latin9]{inputenc}
\usepackage{geometry}
\geometry{verbose,tmargin=2.5cm,bmargin=2.5cm,lmargin=2.5cm}
\usepackage{amsmath}
\usepackage{appendix}
\usepackage{amssymb}
\usepackage{amsfonts}
\usepackage{esint}
\usepackage{makecell,multirow,diagbox}
\usepackage{mathtools}
\usepackage{epstopdf}
\usepackage[latin9]{inputenc}
\usepackage{pict2e}
\usepackage{stmaryrd}
\usepackage{esint}
\usepackage[all,cmtip]{xy}
\usepackage{mathtools}
\usepackage{float}
\usepackage{setspace}
\usepackage{authblk}
\usepackage{amsthm}
\usepackage{mathrsfs}
\usepackage{stmaryrd}
\usepackage{color}
\usepackage{tikz}
\usepackage{verbatim}
\usetikzlibrary{arrows,backgrounds,snakes,shapes}
\usepackage[all,cmtip]{xy}
\usepackage{algorithm}
\usepackage{algorithmic}

\theoremstyle{plain}
\theoremstyle{plain}
\newtheorem{pro}{Proposition}
\newtheorem{remark}{Remark}
\usepackage{color}
\definecolor{marin}{rgb} {0., 0.3, 0.7}
\definecolor{rouge}{rgb} {0.8, 0., 0.}
\definecolor{sepia}{rgb} {0.8, 0.5, 0.}
\usepackage[colorlinks,citecolor=sepia,linkcolor=marin,bookmarksopen,bookmarksnumbered]{hyperref}

\theoremstyle{definition}

\DeclareSymbolFont{largesymbol}{OMX}{yhex}{m}{n}
\DeclareMathAccent{\Widehat}{\mathord}{largesymbol}{"62}

\newcommand{\derv}[2]{ \frac{\partial #1}{\partial #2} }

\makeatletter

\usepackage{babel}
\usepackage{subfig}
\graphicspath{{Figure/}}

\makeatother
\DeclareMathSizes{14}{14}{9.8}{7}

\usepackage{lipsum}

\begin{document}

\title{\textbf{Ultrafast dynamics of a spin-polarized electron plasma with magnetic ions}}
\date{}
\author[1]{\small \textbf{Benjamin Bakri}} \author[2]{\small \textbf{Nicolas Crouseilles}} \author[1]{\small \textbf{Paul-Antoine Hervieux}}
\author[3]{\small \textbf{Xue Hong}} 
\author[1]{\small \textbf{Giovanni Manfredi}}
\affil[1] {Universit\'e de Strasbourg, CNRS, Institut de Physique et Chimie des Mat\'eriaux de Strasbourg, UMR 7504, F-67000 Strasbourg, France}
\affil[2] {Universit\'e de Rennes, Inria Rennes (Mingus team) and IRMAR UMR CNRS 6625, F-35042 Rennes, France \& ENS Rennes}
\affil[3]{Department of Radiation Oncology, University of Kansas Medical Center, USA}

\maketitle
\begin{abstract}
We construct a mean-field model that describes the nonlinear dynamics of a spin-polarized electron gas interacting with fixed, positively-charged ions possessing a magnetic moment that evolves in time. The mobile electrons are modeled by a four-component distribution function in the two-dimensional phase space $(x,v)$, obeying a Vlasov-Poisson set of equations. The ions are modeled by a Landau-Lifshitz equation for their spin density, which contains ion-ion and electron-ion magnetic exchange terms. 
We perform a linear response study of the coupled Vlasov-Poisson-Landau-Lifshitz (VPLL) equations for the case of a Maxwell-Boltzmann equilibrium, focussing in particular on the spin dispersion relation. Condition of stability or instability for the spin modes are identified, which depend essentially on the electron spin polarization rate $\eta$ and the electron-ion magnetic coupling constant $K$.
We also develop an Eulerian grid-based computational code for the fully nonlinear VPLL equations, based on the geometric Hamiltonian method first developed in \cite{Crouseilles2023}. This technique allows us to achieve great accuracy for the conserved quantities, such as the modulus of the ion spin vector and the total energy.
Numerical tests in the linear regime are in accordance with the estimations of the linear response theory. 
For two-stream equilibria, we study the interplay of instabilities occurring in both the charge and the spin sectors.
The set of parameters used in the simulations,  with densities close to those of solids ($\approx 10^{29} \rm m^{-3}$) and temperatures of the order of 10~eV, may be relevant to the warm dense matter regime  appearing in some inertial fusion experiments.
\end{abstract}
\setcounter{tocdepth}{1} 

\tableofcontents

\section{Introduction} 

The interaction of coherent electromagnetic radiation (laser light) with matter is a well-established field within various branches of physics, particularly condensed-matter and nanophysics, where laser pulses are often employed to study how electrons behave on extremely short time scales (femto- or atto-seconds). Indeed, the most common electronic resonance found in metals -- the plasmon resonance -- occurs within the femtosecond time scale. This makes ultrafast laser pulses an essential tool for experimental investigations into the collective behavior of electrons in metals. 

In plasma physics, laser-plasma interactions are essential for the development of inertial fusion (triggered by powerful laser pulses) and laser-plasma accelerators (which rely on the acceleration of charged particles by plasma waves). They are also crucial in the study of warm dense matter (WDM), a state of matter that is at the frontier between solids and dense plasmas, where ultrafast nonequilibrium dynamics have been recently accessed thanks to subpicosecond laser pulses \cite{Falk2018}.

However, in addition to their electric charge, electrons also possess an intrinsic magnetic moment, i.e., a spin. Utilizing the electron spin as a vector to code and transfer information is at the core of the emerging field of spintronics.
In nanophysics, spin effects are at the core of the ultrafast demagnetization observed in ferromagnetic thin films irradiated with femtosecond laser pulses \cite{beaurepaire1996ultrafast,bigot2009coherent,bigot2013ultrafast}.
Despite intense investigations, such ultrafast demagnetization is not yet fully understood, although the spin-orbit interaction \cite{Krieger2015, spin_orbit2, Hinsch2012}, spin currents \cite{spin_current1, spin_current2,Hurst2014} and superdiffusive electron transport \cite{Battiato2010} appear to play a significant role. 

The exploration of spin-dependent effects in plasma physics is a relatively new area of study. Nonetheless, it is now possible to generate and precisely control polarized electron beams with high spin polarization in laboratory settings \cite{Wu2019,Wu2020,Nie2021}. Theoretical studies on polarized plasmas have been revitalized in recent years \cite{Hurst2014,Hurst2017,Zamanian2010njp,Zamanian2010,Morandi2014}, although some early developments date back to the 1980s \cite{Cowley1986PoP}. Notably, Brodin et al. \cite{Brodin2013JPP} have formulated a particle-in-cell (PIC) code that incorporates the magnetic dipole force and magnetization currents related to the electron spin. PIC methods for particles with spin have also been developed for applications in the field of laser-plasma interactions \cite{FeiLi2021JCP}.

Within the condensed matter and nanophysics communities, most research on ultrafast spin dynamics has relied on wavefunction based methods, particularly time-dependent density functional theory, augmented in order to incorporate spin effects (spin-TDDFT) \cite{Krieger2015,Sinha2020,Yin2009,MHC2010}. Spin-TDDFT models have also been utilized to study spin effects in dense plasmas in the WDM regime \cite{Bonitz2020}.

In a recent series of papers \cite{Crouseilles2021,Crouseilles2023,Manfredi2023}, we have proposed an alternative approach based on Wigner functions, which represent electronic quantum states through a pseudo-probability distribution in the classical phase space. The corresponding Wigner evolution equation reduces to the standard Vlasov equation of classical plasma physics. For spin-1/2 particles, such as electrons, one can construct a semi-classical model, where the orbital motion (i.e., the trajectories in the phase space) is treated classically while the spin is kept as a quantum-mechanical variable. For a review of methods based on Wigner functions, see \cite{Manfredi2019}. 

Among these phase space models, two families can be distinguished: on the one side, Vlasov models that use a scalar distribution function on an extended phase-space $(x,v,s)$ 
where $x$ and $v$ are the position and velocity of the electron, while $s$ denotes the spin variable \cite{Zamanian2010njp, marklund2010spin, scalar3, brodin2008effects, marklund2010spin, brodin2008effects}; on the other side, models using a multi-component distribution function $f_\ell, \, (\ell=0,3)$ with values in the standard phase space $(x, v)$. 
These two approaches are almost, although not exactly, equivalent (see our detailed discussion in \cite{Crouseilles2023} for further clarifications). Hereafter, we will name these approaches respectively as "scalar" and "vectorial". Note that, for both of them, the orbital motion is classical while the spin is a fully quantum variable. 
The numerical approximation of these models requires different techniques. Indeed, the scalar version involves an extended phase space of dimension 8, which naturally leads to consider PIC techniques as the method of choice \cite{Crouseilles2021, LI2023111733}; in contrast, the vectorial approach is more easily amenable to grid-based methods \cite{Crouseilles2023}.  

In previous works \cite{Crouseilles2021,Crouseilles2023,Manfredi2023}, we had only considered the dynamics of the mobile (itinerant) electrons, whereas the ions only acted as an immobile neutralizing background. However, in ferromagnets most of the magnetic properties are due to the fixed ions, which account for approximately $95\%$ of the magnetization of the material, whereas only the remaining $\approx 5\%$ can be attributed to the mobile  electrons. 
In the present work, the ions are still fixed (because their orbital response occurs on much longer timescales), but their spin is allowed to evolve in time according to the Landau-Lifshitz (LL) equation. The latter describes the precession motion of a magnetic moment in an effective magnetic field, which can be either an external one or the field created by the spin of the itinerant electrons. In turns, the ions  generate a magnetic field which acts on the spin of the electrons.  The ions also interact among each other through a Heisenberg-type magnetic-exchange interaction, while the electrons feel the usual self-consistent electric field.

Overall, the nonlinear Vlasov-Poisson-Landau-Lifshitz (VPLL) equations describe the coupling between the itinerant magnetism generated by the mobile electrons, represented by a vector distribution function $(f_0, {\bf f})(t, x, v)\in\mathbb{R}^4$, and the fixed magnetism carried by the motionless ions, represented by their local spin ${\bf S}(t, x) \in \mathbb{S}^2$. It can be viewed as a spin-extended version of the usual Vlasov-Poisson model with fixed ions.
An earlier version of this model -- employing a more rudimentary numerical technique -- was used in \cite{Hurst2018} to study spin current generation in thin nickel films.  
Here, we will mainly consider a parameter range relevant to WDM \cite{Bonitz2020}, with densities close to those of solids ($\approx 10^{29} \rm m^{-3}$) and temperatures of the order of 10~eV. For these conditions, the electron plasma is weakly degenerate ($T_e \approx T_F$, where $T_F$ is the Fermi temperature), so that its equilibrium can be characterized with relatively good accuracy by a Maxwell-Boltzmann distribution. The ions are fixed and non-degenerate. 

The model is described mathematically by a set of coupled nonlinear partial differential equations (PDEs). The design of efficient scheme for a system of PDEs is not easy and one possible strategy is to make use of a splitting algorithm.
When the system under consideration enjoys a Hamiltonian structure, a systematic way to proceed relies on the Hamiltonian splitting \cite{splitting, casas, laser2020, CRESTETTO2022110857}. It turns out that the VPLL equations enjoy a Poisson structure which motivates the use of Hamiltonian time splitting. Following previous 
development of geometric numerical method for Vlasov-type equations \cite{splitting, laser2020, CRESTETTO2022110857}, the Hamiltonian splitting applied to the VPLL leads to five subsystems that can be solved exactly in time, and for which efficient and high-order methods in space and velocity can be used. 
As a consequence, the time accuracy of the resulting scheme only depends on the splitting error (which can be made arbitrarily small using high-order composition splittings \cite{hairer_lubich_wanner, yoshida}) and since the method is symplectic (as composition of symplectic flows), it maintains long term accuracy on invariants such as the total energy \cite{hairer_lubich_wanner}. Another interesting property that can be proven for the proposed scheme is the exact preservation of the norm of the ion spin $\|{\bf S}\|$. 

To validate the numerical results, we investigate the linearized VPLL system by deriving the pertinent dispersion relation, following \cite{Manfredi2019}. When the ion-electron coupling is turned off, the dispersion relation degenerates into 
the standard Bohm-Gross relation for plasmons and the magnon dispersion relation for the ion spins \cite{Eich:420521}. It is noteworthy that the typical plasmon timescale is about two orders of magnitude faster than that of magnons, which constitutes a considerable challenge for the numerical scheme.
In the case of Maxwell-Boltzmann equilibria, the dispersion relations can be solved numerically using dedicated libraries, e.g. Zeal \cite{zeal}. 
Moreover, analytical calculations are performed in the weak coupling regime.
Cross-validations between the roots of the dispersion relation and the results of the nonlinear code are performed and discussed. 

The rest of the paper is organized as follows. Section \ref{section:VPLL} lays the basis of the VPLL model equations and their nondimensional form. Section \ref{section:linear} discusses the linear response theory and the corresponding dispersion relation.
The numerical method is presented in section \ref{section:numericalmethod}. Results of numerical simulations are presented in section \ref{section:results}, both for a stable Maxwell-Boltzmann equilibrium and an unstable two-stream distribution function, and compared to linear-response results obtained from the dispersion relation, particularly for damping and growth rates. Conclusions are drawn in section \ref{section:conclusion}. Three Appendices provide some further details on the Maxwell-Boltzmann equilibrium with spin (Appendix \ref{appendix:equilibrium}), the dispersion relation (Appendix \ref{appendix:dispersion}), and the numerical splitting technique (Appendix \ref{split_app}).


\section{Vlasov-Poisson-Landau-Lifshitz model} \label{section:VPLL}

We consider a generic scenario where a magnetic material (e.g., nickel) is irradiated with a strong femtosecond laser pulse, so that some or most of the electrons are extracted from the bulk and can move freely, thus constituting a mobile electron plasma. The pulse heats up the electrons to a temperature equivalent to their Fermi energy, which for nickel is $E_F \approx 10 \,\rm eV$, while their density remains similar to that of the solid $n_e \approx 10^{29} \rm m^{-3}$. 
These parameters are close to those of the weakly degenerate plasmas typical of WDM \cite{Bonitz2020,Falk2018}.
During these initial instants, up to about 100~fs, the ions do not have time to move, and can thus be assimilated to an immobile, but magnetized, background. 

Within this broad context, our purpose here is to validate our numerical code, in the linear and nonlinear regimes, for parameters that are similar to those mentioned above. Hence, we will consider a one-dimensional (1D) model with periodic boundary conditions, and will investigate how a perturbed Maxwell-Boltzmann equilibrium evolves in time, for both the charge (plasmons) and spin (magnons) sectors. We will also analyze potentially unstable two-stream equilibria.

\subsection{Model equations}
The electrons are described by a four-component distribution function $(f_0, {\mathbf f})(t, {x}, {v})$ with ${\mathbf f} = (f_1, f_2, f_3)\in \mathbb{R}^3$, which is coupled to the continuous ion spin  distribution
${\mathbf S}(t, x) = (S_1, S_2, S_3)(t, x)$. The overall system of equations, for the space variable $x\in [0, L]\subset \mathbb{R}$ and velocity variable $v\in\mathbb{R}$,
is composed of set of kinetic equations for the electron distribution functions 
\cite{Manfredi2019,Crouseilles2023}
\begin{align}
\label{f_0}
\frac{\partial f_0}{\partial t} &+ {v}  \frac{\partial f_0}{\partial { x}} + \frac{e}{m} \frac{\partial V_H}{\partial { x}}  \frac{\partial f_0}{\partial { v}} - \frac{\mu_B }{m} 
\frac{\partial {\mathbf B}}{\partial x}\cdot  \frac{\partial {\mathbf f}}{\partial  v}= 0,\\
\label{f_i}
\frac{\partial f_\ell}{\partial t}  &+ {v} \frac{\partial f_\ell}{\partial { x}}+\frac{e}{m} \frac{\partial V_H}{\partial { x}}  \frac{\partial f_\ell}{\partial { v}}- \frac{\mu_B }{m}  \frac{\partial B_\ell}{\partial{  x}} \frac{\partial f_0}{\partial {v}} - \frac{e}{m}\left( {\mathbf B} \times {\mathbf f} \right)_\ell = 0, \ \ell = 1, 2, 3, 
\end{align}
and Landau-Lifshitz equation \cite{lakshmanan2011} for the ion spins 
\begin{equation}
\label{ion_S}
\frac{\partial {\mathbf S}}{\partial t} = \frac{a^2 J }{\hbar}\left({\mathbf S}^{} \times \frac{ \partial^2 {\mathbf S}^{}}{\partial x^2} \right) + \frac{K}{2\hbar} {\mathbf S}^{} \times \int {\mathbf f} \mathrm{d}{ v},
\end{equation}
where the first term on the right-hand side is the Heisenberg ion-ion magnetic exchange, whereas the second term represents the ion-electron magnetic exchange.

The scalar distribution function $f_0(t,x,v)$ represents, as usual, the probability to find an electron in the phase space volume located around $(x,v)$, at time $t$. Its moments yield the usual macroscopic quantities, such as the density $n_e(t,x) = \int f_0(t, x, v) \mathrm{d} v$.
In contrast, the vector distribution function $f_\ell(t,x,v)$ represents the mean spin polarization density of the electrons in the phase space volume located around $(x,v)$ at time $t$, along the $\ell$ direction. Its first moment ${\mathbf M}(t,x) = \int {\mathbf f}(t, x, v) \mathrm{d} v$ represents the electron spin density. For more details, see the recent review \cite{Manfredi2019}.
The relationship between this $(f_0, {\mathbf f})$ representation and the more standard representation as a $2 \times 2$ matrix with spin-up and spin-down components is also illustrated in the Appendix \ref{appendix:equilibrium}.

The self-consistent electric potential (Hartree potential) $V_H(t, x)$ obeys the Poisson equation
\begin{equation}
\label{poisson}
\epsilon_0 \, \partial_x^2 V_H= e\int f_0 \mathrm{d}{ v}- Ze\, n_{\rm ion}, 
\end{equation}
and the magnetic field appearing in \eqref{f_0}-\eqref{f_i} is primarily the one created by the ions
\begin{equation}
\label{eq:Bfield}
{\mathbf B}(t, x)=-\frac{K n_{\rm ion} {\mathbf S^{}(t, x)}}{2\mu_B} ,
\end{equation}
although external fields could also be considered.
Here, $e>0$ denotes the electron charge, $\hbar$ 
the Planck constant, $m$ the electron mass, $\epsilon_0$ the  permittivity of vacuum, $\mu_B=e\hbar/2m$ the Bohr magneton, $a$ the interatomic distance, $Z$ is the atomic number, $J$ and $K$ are respectively the ion-ion and electron-ion magnetic exchange constants, and $n_{\rm ion}$ is the fixed, homogeneous ion density. The full initial condition may be denoted as $(f_0, {\mathbf f}, V_H, \mathbf S)(t=0)=(f_0^{(0)}, {\mathbf f}^{(0)}, V_H^{(0)}, \mathbf S^{(0)})$, where $\varepsilon_0 \partial_x^2 V_H^0 =e \int f_0^0 \mathrm{d}{ v}-Ze n_{\rm ion}$.

Note how the $K$-terms couple the ion and electron spin dynamics: the magnetic field $\mathbf B$ given by \eqref{eq:Bfield} created by the ions acts on the spin part of the electron distribution functions $\mathbf f$ in \eqref{f_0}-\eqref{f_i}, while the electron spin density $\int {\mathbf f} \mathrm{d}{ v}$ acts on the LL equation \eqref{ion_S} for the ion spins.
A schematic view of the physical system under consideration is shown in Fig. \ref{fig:schematic}.
\begin{figure}[H]
    \centering
    \includegraphics[scale=0.5]{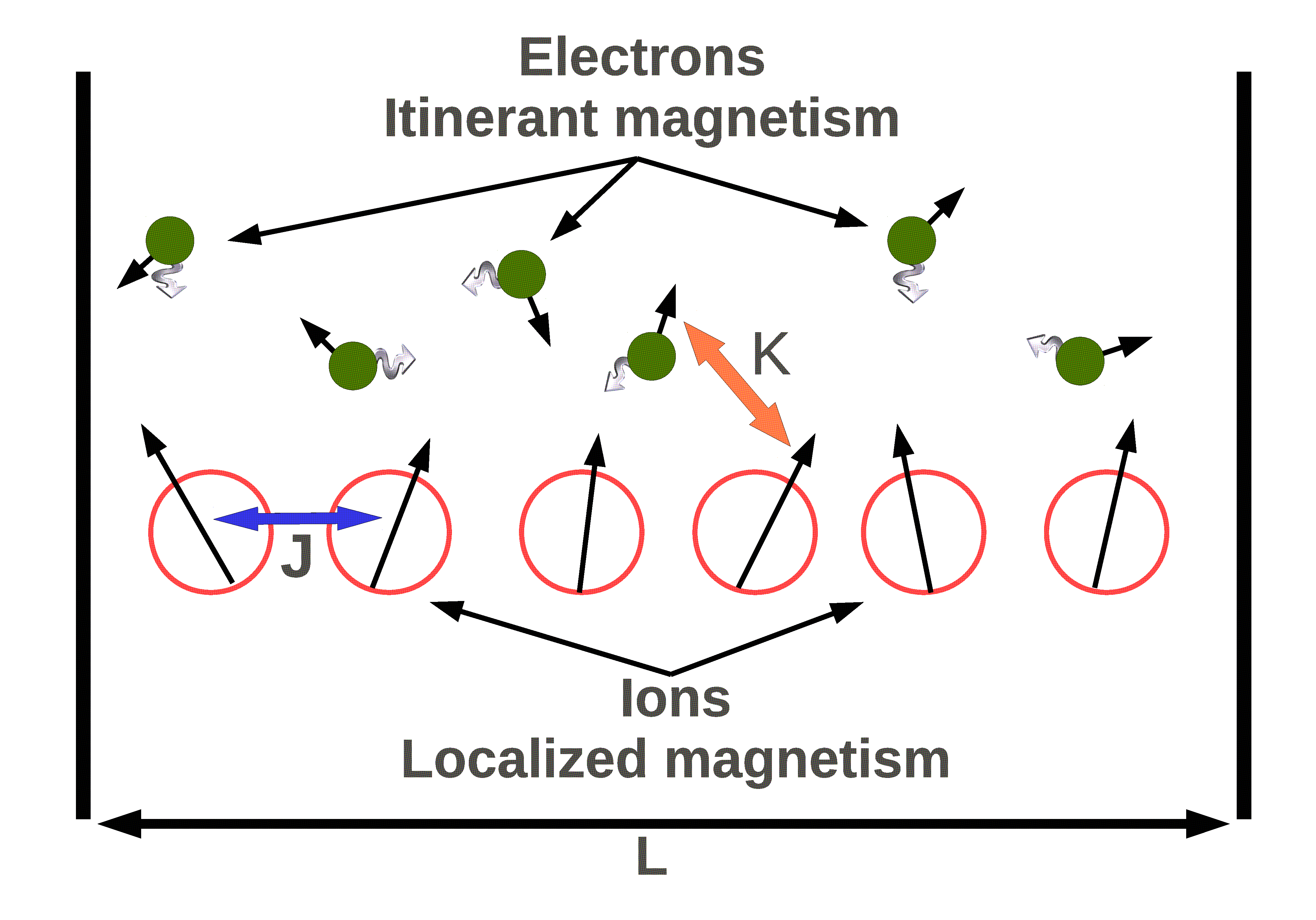}
    \caption{Schematic view of the physical system under consideration. The immobile ions (red circles) provide the main source of localized magnetism. They interact through magnetic exchange both with themselves (coupling constant $J$) and with the itinerant
    electrons, represented by green dots (coupling constant $K$).  }
    \label{fig:schematic}
\end{figure}

Form a mathematical viewpoint, the model \eqref{f_0}-\eqref{poisson} enjoys a Poisson structure with the following Hamiltonian functional
\begin{equation}
\label{hamiltonian}
\mathcal{H} = \frac{m}{2}\int {v}^2 f_0 \mathrm{d}{x} \mathrm{d}{v} + \mu_B \int {\mathbf f}  \cdot {\mathbf B} \mathrm{d}{x} \mathrm{d}{v} + \frac{\epsilon_0}{2} \int (\partial_x V_H)^2 \mathrm{d}{x} + \frac{a^2 J }{2} \int n_{\rm ion}
\sum_{\ell=1}^3  \Big(\frac{\partial {S^{}_\ell}}{\partial x}\Big)^2 \mathrm{d}{ x}.  
\end{equation}
Moreover, it is possible to construct a  Poisson bracket for two functionals 
$\mathcal{F}$ and $\mathcal{G}$ 
\begin{equation}
\label{poisson_bracket}
\begin{aligned}
\{\mathcal{F},\mathcal{G}\}=&\sum_{i=0}^3 \int \frac{f_0}{m} \left[ \frac{\delta \mathcal{F}}{\delta f_i}, \frac{\delta \mathcal{G}}{\delta f_i}\right]_{xv}  \mathrm{d}{ x}\mathrm{d}{ v} + \sum_{i=1}^3 \left( \int \frac{f_i}{m} \left[ \frac{\delta \mathcal{F}}{\delta f_0}, \frac{\delta \mathcal{G}}{\delta f_i}\right]_{xv} \mathrm{d}{ x}\mathrm{d}{ v} + \int \frac{f_i}{m} \left[ \frac{\delta \mathcal{F}}{\delta f_i}, \frac{\delta \mathcal{G}}{\delta f_0}\right]_{xv} \mathrm{d}{ x}\mathrm{d}{ v} \right) \\
& +\frac{e}{\mu_B m} \int {\mathbf f} \cdot \left( \frac{\delta \mathcal{F}}{\delta {\mathbf f}} \times  \frac{\delta \mathcal{G}}{\delta {\mathbf f}}\right) \mathrm{d}{ x}\mathrm{d}{ v} + \frac{1}{\hbar}\int \frac{{\mathbf S}^{}}{n_{\rm ion}} \cdot \left( \frac{\delta \mathcal{F}}{\delta {\mathbf S}^{}} \times  \frac{\delta \mathcal{G}}{\delta {\mathbf S}^{}}\right) \mathrm{d}{ x}. 
\end{aligned}
\end{equation}

\begin{remark}
It is easy to check that the bracket \eqref{poisson_bracket}
is bilinear, skew-symmetric, and satisfies Leibniz's rule, but it is not clear whether Jacobi's
identity is satisfied. Hence, this bracket is not strictly speaking a Poisson bracket; nevertheless we will still refer to it as a Poisson bracket for
the sake of simplicity. 
\end{remark}

With these notations in hand, the system \eqref{f_0}-\eqref{poisson} can be reformulated, after introducing the vector of unknowns $\mathcal{Z}=(f_0, {\mathbf f}, {\mathbf S}^{})\in\mathbb{R}^7$, as  
\begin{equation}
\label{eq:possionbracketequation}
\frac{\partial \mathcal{Z}}{\partial t} = \{ \mathcal{Z}, \mathcal{H} \}. 
\end{equation}

\subsection{Normalized dimensionless equations}
\label{subsec:normalized}
We rewrite the above equations \eqref{f_0}-\eqref{poisson} using dimensionless units that correspond to normalizing time to the inverse of the plasmon frequency $\omega_p=\sqrt{e^2n_e/\epsilon_0 m}$, velocities to the thermal speed $v_{th} = \sqrt{k_B T_e/m}$, and space to the Debye length $\lambda_D = v_{th}/\omega_p$, where $k_B$ is the Boltzmann constant. Hence the electric potential is normalized to $mv_{th}^2/e$, the electric field to $mv_{th}\omega_p/e$, and the magnetic field to $m \omega_p/e$.

Using these normalized units and defining the self-consistent electric field as $E_x=-\partial_x V_H$, the dimensionless kinetic equations read as (for simplicity of notation, we do not change the names of the dimensionless variables):
\begin{align}
\label{f_0_norm}
&\frac{\partial f_0}{\partial t} + {v}  \frac{\partial f_0}{\partial { x}} -E_x \frac{\partial f_0}{\partial { v}} - H  \frac{\partial {\mathbf B}}{\partial{ x}}\cdot \frac{\partial {\mathbf f}}{\partial { v}}= 0,\\
\label{f_i_norm}
&\frac{\partial f_\ell}{\partial t}  + {v} \frac{\partial f_\ell}{\partial { x}}-E_x  \frac{\partial f_\ell}{\partial { v}}- H  \frac{\partial B_\ell}{\partial{  x}} \frac{\partial f_0}{\partial {v}} - \left( {\mathbf B} \times {\mathbf f} \right)_\ell = 0, \;\; \ \ell = 1, 2, 3,
\end{align}
where
\begin{equation}
 {\mathbf B}=-\frac{\widetilde{K} {\mathbf S^{}}}{2}
 \label{eq:magn-field-ions}
\end{equation}
is the magnetic field created by the ions.

The dimensionless Planck constant
\begin{equation}
\displaystyle H=\frac{\hbar \omega_p}{2 mv_{th}^2}
\end{equation}
quantifies the relative importance of quantum effects with respect to thermal effects.
We also note that $H$ can be written in terms of the quantum coupling parameter $\Gamma_q = \hbar \omega_p/E_F$ and the degeneracy parameter $\Theta = T_e/T_F$ as: $H = \Gamma_q/(2\Theta)$.
In turn, the quantum coupling parameter is related to the Wigner-Seitz radius $r_s$ through the relationship \cite{Bonitz2020}: 
\begin{equation}
\label{eq:rs}
\Gamma_q^2 = \frac{r_s}{a_0} \, \frac{16}{9\pi} \left(\frac{12}{\pi}\right)^{1/3} \approx 0.88\, \frac{r_s}{a_0} ,
\end{equation}
where $a_0 = 4\pi \varepsilon_0 \hbar^2/(me^2)$ is the Bohr radius.

The normalized LL equation becomes 
\begin{align}
\label{S_norm}
&\frac{\partial {\mathbf S}^{}}{\partial t} = A\left({\mathbf S}^{} \times \frac{ \partial^2 {\mathbf S}^{}}{\partial x^2} \right) + Z\frac{\widetilde{K}}{4} {\mathbf S}^{} \times \int {\mathbf f} \mathrm{d}{ v},
\end{align}
with the dimensionless magnetic exchange constants written as $\displaystyle A=\frac{{a}^2}{\lambda_D^2}\, \frac{J}{\hbar \omega_p}$ and $\displaystyle \widetilde{K} = \frac{2 K n_{\rm ion}}{\hbar \omega_p}$. 
Finally, the dimensionless Poisson equation is 
\begin{align}
\label{poisson_norm}
&-\partial_x E_x=\int f_0 \mathrm{d}{ v}-1.  
\end{align}
The total energy in dimensionless units  is given by the Hamiltonian $\displaystyle \mathcal{H}=\mathcal{H}_v+\mathcal{H}_E+\sum_{i=1}^3\Big(\mathcal{H}_{Z,i}+\mathcal{H}_{spin, i}\Big)$, with
\begin{equation}
\begin{aligned}
\label{hamiltonian_dim}
\mathcal{H}_{v}& = \frac{1}{2}\int v^2 f_0 \mathrm{d}{ x}\mathrm{d}{v},\;\;\;\;\;\;\;\; 
\mathcal{H}_{E} = \frac{1}{2} \int \Big(\frac{\partial V_H}{\partial x}\Big)^2 \mathrm{d}{x},\\
\mathcal{H}_{Z,i} &= H \int   f_i B_i  \mathrm{d}x\mathrm{d}v, \;\; \mathcal{H}_{spin,i}= AH \int \Big(\frac{\partial {S^{}_i}}{\partial x}\Big)^2 \mathrm{d}{x},
\end{aligned}
\end{equation}
where the various terms correspond to the kinetic energy ($\mathcal{H}_{v}$), the Hartree electric energy ($\mathcal{H}_{E}$), the magnetic Zeeman energy ($\mathcal{H}_{Z}$), and the spin energy ($\mathcal{H}_{spin}$).

We consider an electron plasma in the WDM regime, with density $n_e = n_{\rm ion} = 9.17 \times 10^{28} \,\rm m^{-3}$ ($Z=1$) and temperature $k_B T_e = 16.58\,\rm eV$. This choice yields for the time, velocity, and length scales: $\omega_p=1.71\times 10^{16}\, \text{s}^{-1}$, $v_{th}=1.71 \times 10^6 \, \text{m  s}^{-1}$, and $\lambda_D=10^{-10}\, \text{m}$.
As to the dimensionless parameters, we find: normalized Planck constant $H=0.339$, quantum coupling parameter $\Gamma_q = 1.516$, Wigner-Seitz radius $r_s/a_0 = 2.60$ (corresponding to nickel), and degeneracy parameter $\Theta = 2.24$.

For the magnetic exchange coupling constants, we use values close to those of nickel \cite{Hurst2018}: $J = 0.022\,\rm eV$ and $K= 0.01 \,\rm eV \, nm^3$. Taking the lattice spacing $a = 2 r_s = 0.275 \,\rm nm$, this yields for the dimensionless parameters: $A=0.0148$ and $\widetilde{K}=0.161$.

\section{Linear analysis and dispersion relations}
\label{section:linear}
\subsection{Linear analysis for a generic equilibrium}
In order to validate the model \eqref{f_0_norm}-\eqref{poisson_norm} in the linear response regime, we perform a linear analysis to derive the pertinent dispersion relation. First, we 
start with the following homogeneous  stationary state: 
\begin{eqnarray*}
f_0^{(0)}=f_0^{(0)}(v), \; && \;  f_3^{(0)}=f_3^{(0)}(v), \nonumber\\ f_1^{(0)}=f_2^{(0)}=0, \; && \; S^{(0)}_1=S^{(0)}_2=E_x^{(0)}=0, \; \mbox{ and }  S^{(0)}_3=1, 
\end{eqnarray*}
where the superscript "$(0)$" stands for equilibrium.
This corresponds to an ion system that is fully polarized in the $S_3$ direction, and an electron system that is partially polarized in the same direction. The degree of electron spin polarization depends on the choice of $f_3^{(0)}(v)$, and can be characterized by a single number $\eta = \int_{-\infty}^{\infty} f_3^{(0)}(v) dv$, with $\eta \in [-1,1]$

We then derive the linearized system and study the propagation of a perturbation around the stationary state. We thus consider solutions in the form 
$$f_0=f_0^{(0)}+ f_0^{(1)},\, f_\ell=f_\ell^{(0)}+ f_\ell^{(1)},\, E_x =E_x^{(0)}+ E_x^{(1)},\, {\rm and}\,\, S^{}_\ell=S^{(0)}_\ell+ S_\ell^{(1)}.$$
Inserting these solutions into the system 
\eqref{f_0_norm}-\eqref{poisson_norm} and neglecting quadratic terms leads to the following linear system 
\begin{align}
&\frac{\partial  f_0^{(1)}}{\partial t} + {v}  \frac{\partial  f_0^{(1)}}{\partial { x}} - E_x^{(1)} \frac{\partial f_0^{(0)}}{\partial { v}} +\frac{H\widetilde{K}}{2} \frac{\partial  S^{(1)}_3}{\partial{ x}} \frac{\partial f_3^{(0)}}{\partial { v}}= 0,\\
&\frac{\partial f_1^{(1)}}{\partial t} + {v}  \frac{\partial  f_1^{(1)}}{\partial { x}} +\frac{H\widetilde{K}}{2} \frac{\partial  S^{(1)}_1}{\partial{ x}} \frac{\partial f_0^{(0)}}{\partial { v}} -\frac{\widetilde{K}}{2}( f_2^{(1)}-f_3^{(0)}  S^{(1)}_2)  = 0,\\
&\frac{\partial  f_2^{(1)}}{\partial t} + {v}  \frac{\partial  f_2^{(1)}}{\partial { x}} +\frac{H\widetilde{K}}{2} \frac{\partial S^{(1)}_2}{\partial{ x}} \frac{\partial f_0^{(0)}}{\partial { v}} +\frac{\widetilde{K}}{2}( f_1^{(1)}-f_3^{(0)} S^{(1)}_1)  = 0,\\
&\frac{\partial  f_3^{(1)}}{\partial t} + {v}  \frac{\partial f_3^{(1)}}{\partial { x}} - E_x^{(1)} \frac{\partial f_3^{(0)}}{\partial { v}} +\frac{H\widetilde{K}}{2} \frac{\partial  S^{(1)}_3}{\partial{ x}} \frac{\partial f_0^{(0)}}{\partial { v}}= 0,\\
&\frac{\partial  S_1^{(1)}}{\partial t}=-A\frac{ \partial^2  S^{(1)}_2}{\partial x^2}+\frac{\widetilde{K}}{4}\Big( S^{(1)}_2\int f_3^{(0)} \mathrm{d}{ v} -\int  f_2^{(1)} \mathrm{d}{ v}\Big),\\
&\frac{\partial  S_2^{(1)}}{\partial t}=A \frac{ \partial^2  S^{(1)}_1}{\partial x^2}-\frac{\widetilde{K}}{4}\Big( S^{(1)}_1\int f_3^{(0)} \mathrm{d}{ v} -\int  f_1^{(1)} \mathrm{d}{ v}\Big),\\
&\frac{\partial S^{(1)}_3}{\partial t}=0,\\
&-\partial_x  E_x^{(1)}=\int  f_0^{(1)} \mathrm{d}{ v}.
\end{align}
By performing Fourier (in space) and Laplace (in time) transforms of the above linear system of equations, we can  derive an equation relating the frequency $\omega$ and the wave number $k$ (we shall further refer to $\omega_e$ for the charge branch of the dispersion relation and $\omega_s$ for the spin branch). Since $S_3$ does not depend on time, the dispersion relation for $f_0^{(1)}$ and $E_x^{(1)}$ is the same as the standard Bohm-Gross relation for unpolarized electrons, that is  
\begin{equation}
\label{dispersion_f}
D_e(\omega_e,k) \equiv 1-\frac{1}{k}\int \frac{ \partial_v f_0^{(0)}}{kv-\omega_e}\,\mathrm{d}{ v}=0
\end{equation}
(here and in the following, velocity integrals are understood as being from $-\infty$ to $+\infty$).
Hence, at the level of the linear response, the spin and charge motions are completely separated. This is an important fact, as it means that an excitation (e.g., a laser pulse) acting only on the charge density will not trigger any response in the spin dynamics.
In order to generate a spin dynamics, one needs either a strong pulse that generates nonlinear effects, or an excitation that acts directly on the spins (e.g., via the magnetic part of the laser pulse).

Next, we consider the equations for $f_1^{(1)}$, $f_2^{(1)}$, $S_1^{(1)}$ and $S^{(1)}_2$, 
which lead to the dispersion relation for the ion spin motion:
\begin{equation}
\label{dispersion_S}
D_S(\omega_s,k) \equiv -\Big[\omega_s - \frac{\widetilde{K}^2}{8}\Big(\frac{\widetilde{K}H k}{2}  I_0+I_1\Big)\Big]^2+\Big[A k^2+\frac{\widetilde{K}\eta}{4}- \frac{\widetilde{K}^2}{8}\Big(H k  I_3+\frac{\widetilde{K}}{2} I_2\Big)\Big]^2 = 0,
\end{equation}
where we have defined the integrals 
\begin{eqnarray*}
I_0&=&\int \frac{\partial_v f_0^{(0)}}{(\widetilde{K}/2)^2-(vk-\omega_s)^2} \mathrm{d}{ v}, \;\;\;\;\;\;\;\;\;\;\;\;\;\;\;\;\;\;\;\;\;  I_1=\int \frac{(vk-\omega_s) f_3^{(0)}}{(\widetilde{K}/2)^2-(vk-\omega_s)^2} \mathrm{d}{ v},\\
I_2&=&\int \frac{ f_3^{(0)}}{(\widetilde{K}/2)^2-(vk-\omega_s)^2} \mathrm{d}{ v},  \;\;\;\;\;\;\;\;\;\;\;\;\;\;\;\;\;\;\;\;\; I_3=\int \frac{(vk-\omega_s)\partial_v f_0^{(0)}}{(\widetilde{K}/2)^2-(vk-\omega_s)^2} \mathrm{d}{ v}. 
\end{eqnarray*}

Note that, when one neglects the electron-ion coupling, i.e. $\widetilde{K}=0$, the spin branch of the dispersion relation reduces to: $\omega_s = \pm A k^2$, which is the standard magnon dispersion relation \cite{Ashcroft}. 
In contrast, the dispersion relation for the electrons yields, from \eqref{dispersion_f}, $\omega_e \approx \omega_p$. Taking the ratio of the magnon and plasmon frequencies yields:
\begin{equation} \label{eq:scalediffer}
\frac{\omega_s}{\omega_e} = \frac{A k^2}{\omega_p} \approx 8.6 \times 10^{-3},
\end{equation}
where we used the parameters given in section \ref{subsec:normalized}, i.e., $\omega_p=1.71\times 10^{16}\, \text{s}^{-1}$ and $A=0.0148$, and considered a typical length $k^{-1} = 10\,\rm nm$. This indicates that the timescale of magnons is about two order of magnitudes slower than that of plasmons. This fact has an obvious impact on the numerical simulations, as many hundreds of plasmon cycles have to be resolved before one can observe a sizeable response in the ion spins.

\subsection{Maxwell-Boltzmann equilibrium}
Now we assume the stationary states $f_0^{(0)}, f_\ell^{(0)}$ to be Gaussian functions, so that  $I_0, I_1, I_2,I_3$ can be expressed using the Fried-Comte function \cite{Fried1961} $Z(z)=\frac{1}{\sqrt{\pi}}\int_{\mathbb{R}} \frac{e^{-t^2}}{t-z}\mathrm{d}{ t}, \; z\in \mathbb{C}$, which can itself be expressed  using the erfi function  $\mbox{erfi}(z)=\frac{2}{\sqrt{\pi}}\int_0^z e^{t^2}\mathrm{d}{ t}, \; z\in \mathbb{C}$ and is tabulated in several scientific libraries. 

Let consider that the following homogeneous equilibrium 
\begin{equation}
f_0^{(0)}(v)=\frac{1}{\sqrt{\pi}}e^{-v^2}, \;\;\;\;\;\;\;\;\;\;   f_3^{(0)}(v)=\eta \frac{1}{\sqrt{\pi}}e^{-{v^2}}, 
\label{eq:MBequilibrium}
\end{equation}
where  $\eta=\int f_3^{(0)} dv$ is the spin polarization rate of the electrons (see Appendix \ref{appendix:equilibrium} for further details). The dispersion function $D_e$ for the charge dynamics becomes 
$$
D_e(\omega_e,k)=1+\frac{2}{k^2}\Big[1+\frac{\omega_e}{k}Z\Big(\frac{\omega_e}{k}\Big)\Big]. 
$$
while the spin dispersion function $D_S$ is 
\begin{eqnarray}
D_{S}(\omega_s,k)&=& -\Big\{\omega_s + \frac{c_0}{k} \Big[Z\Big(\frac{\omega_s+\widetilde{K}/2}{k}\Big)+Z\Big(\frac{\omega_s-\widetilde{K}/2}{k}\Big)\Big]-c_1 \Big[Z'\Big(\frac{\omega_s-\widetilde{K}/2}{k}\Big)-Z'\Big(\frac{\omega_s+\widetilde{K}/2}{k}\Big)\Big]\Big\}^2\nonumber\\
&&\hspace{-2.2cm}+\Big\{A k^2+d +\frac{c_0}{k}  \Big[Z\Big(\frac{\omega_s+\widetilde{K}/2}{k}\Big)-Z\Big(\frac{\omega_s-\widetilde{K}/2}{k}\Big)\Big]+ c_1 \Big[Z'\Big(\frac{\omega_s-\widetilde{K}/2}{k}\Big)+Z'\Big(\frac{\omega_s+\widetilde{K}/2}{k}\Big)\Big]\Big\}^2
\label{dispersion}
\end{eqnarray}
with $c_0=\widetilde{K}^2\eta/16$,  $c_1=\widetilde{K}^2 H/16,$ and $d=\widetilde{K}\eta/4 $. 
Moreover, the complex-valued function $Z$ and its derivative are given by 
\begin{equation*}
    Z(z) = \sqrt{\pi}\exp(-z^2)(i - \mbox{erfi}(z)), \;\; Z'(z)=-2(z Z(z)+1).
\end{equation*}


\subsection{Analysis and computation of the spin dispersion relation}
\label{subsec:dispersion-comp}

In this section, we will use another form of the dispersion function which is strictly equivalent  to $D_S$ given by \eqref{dispersion} . $D_S$ can be written as the product of two different functions $D_S=D_-D_+$ (see Appendix \ref{appendix:compact_dispersion_relation} for further details), each of which generates the same solutions, but with different signs. 
In the following, we consider the function that gives rise to positive real frequencies in the limiting case $\widetilde{K}=0$, i.e.
\begin{equation}
    D_{-}(\omega_s,k)=
    \omega_s
    -Ak^2
    -\frac{\widetilde{K}}{4}\int  f_3^{(0)}\mathrm{d}v
    +\frac{\widetilde{K}^2}{8k}
    \int  \frac{f_3^{(0)}}{v-\left(\frac{\omega_s-\widetilde{K}/2}{k}\right)}\mathrm{d}v
    -\frac{H\widetilde{K}^2}{8}
    \int  \frac{\partial_v f_0^{(0)}}{v-\left(\frac{\omega_s-\widetilde{K}/2}{k}\right)}\mathrm{d}v,
     \label{eq:dispertion_analytical_form_MB_with_distribution}
\end{equation}
or, in terms of the plasma dispersion function $Z$,
\begin{equation}
    D_{-}(\omega_s,k)=
    \omega_s
    -Ak^2
    -\frac{\widetilde{K}\eta}{4}
    +\frac{\widetilde{K}^2\eta}{8k}
    Z\left(\frac{\omega_s-\widetilde{K}/2}{k}\right)
    -\frac{H\widetilde{K}^2}{8}
    Z'\left(\frac{\omega_s-\widetilde{K}/2}{k}\right) .
    \label{eq:dispertion_analytical_form_MB}
\end{equation}

This formulation highlights the different contributions to the magnon frequency. Let us spell out each term of the right-hand side of \eqref{eq:dispertion_analytical_form_MB_with_distribution}:
\begin{itemize}
    \item The first two terms yield the standard dispersion relation for magnons, $\omega_s = A k^2$;
    \item The next term shifts the magnon frequency due to ion precession around the magnetic field generated by electronic spins at steady state;
    \item The last two terms introduce corrections that are brought over by electrons that possess specific (resonant) velocities, either in their spin distribution $f_3^{(0)}$ or their charge distribution $f_0^{(0)}$ at equilibrium. This is similar to the resonant electrons that are responsible for Landau damping in spin-less plasmas.  
\end{itemize}

Equation \eqref{eq:dispertion_analytical_form_MB} possesses complex solutions in $\omega_s$, due to the complex-valued function $Z$. Physically, this means that some resonances occur in the electron population when the velocity is equal to (restoring physical dimensions for clarity) $v =\frac{\omega_s}{k}-\frac{\omega_L}{k}$, where $\omega_L=eB/m= 2\mu_B B/\hbar$ is the Larmor frequency of an electron spin in the magnetic field created by the (fully polarized) ions, $B=K n_{\rm ion}/(2\mu_B)$.
Thus, $\omega_s/k \equiv v_s$ is the phase velocity of the ion spin wave (the magnon), whereas  $\omega_L/k\equiv v_L$ is the phase velocity of the electronic spin wave propagating in the magnetized environment created by the polarized ions.
The resonance occurs when the electron spin precesses at the same frequency as the magnon, shifted by Doppler effect due to the electron velocity with respect to the fixed ions. In terms of the phase velocities, this can be written as: $v_s -v=v_L$.

This resonance behaves similarly to the Electron Cyclotron Resonance Heating (ECRH) effect in fusion plasmas, with two major differences. First, the ion spin wave (magnon) plays the role of the external electromagnetic wave in ECRH; second, the magnetic moment of the electrons is not orbital as in ECRH, but instead is due to the electron's intrinsic spin.

It is useful to compute the dispersion function $D_-(\omega_s,\widetilde{K})$ in terms of the coupling constant $\widetilde{K}$ and the frequency $\omega_s$, for a fixed value of the wave number $k$. Then, the solutions of the dispersion relation can be computed along a path in the $(\omega_s,\widetilde{K}$) plane, by solving the equation
\begin{align}
    \partial_{\widetilde{K}}D_- d\widetilde{K}+\partial_{\omega_s}D_- d\omega_s=0,
\end{align}
starting from known solutions, for instance the one at zero coupling $\omega_s(\widetilde{K}=0) \equiv \omega_0 =Ak^2$.
Solving for $\omega_s(\widetilde{K})$ yields
\begin{equation}
    \omega_s(\widetilde{K})=\omega_0-\int_0^{\widetilde{K}}  \left.\frac{\partial_{\widetilde{K}}D_-}{\partial_{\omega_s}D_-}\right|_{\omega_s(\widetilde{K}),\widetilde{K}} d\widetilde{K} .
    \label{eq:dispertion_integral_form_MB}
\end{equation}
Numerically, the solution is found by starting at $\widetilde{K}=0$ and then increasing $\widetilde{K}$ of small steps $d\widetilde{K}$ until the desired value is reached. The derivatives of $D_{-}$ used in \eqref{eq:dispertion_integral_form_MB} are given in the Appendix \ref{appendix:D_differential}.

In the figures \ref{fig:dispersion-selfconsistent} and \ref{fig:dispersion-eta}, we show the results obtained from equation \eqref{eq:dispertion_integral_form_MB} for three cases with same wave number $k=0.5$, but  different electron spin polarization $\eta$. 
The results of the dispersion relation are compared to numerical results obtained with the fully nonlinear code with a small perturbation around the equilibrium, as detailed in section \ref{section:results}.
For all cases, the agreement is excellent, which constitutes a cross-validation for both the numerical code and the above analytical developments.

In figure \ref{fig:dispersion-selfconsistent}, we use the value of $\eta$ that is consistent with electrons at thermal equilibrium that are polarized by the magnetic field $B$ created by the magnetized ions, see equation \eqref{eq:Bfield} (we shall refer to this case as the "self-consistent" case). In this case, the spin polarization is given by $\eta= \tanh(2\mu_B B /k_B T_e) = \tanh(H\widetilde{K})$  and obviously depends on the electron-ion magnetic coupling -- more details are given in Appendix \ref{appendix:equilibrium}.

In contrast, in figure \ref{fig:dispersion-eta}, we use two arbitrary values of the electron spin polarization, $\eta=0.5$ and $\eta=-0.5$. The negative value means that the electrons are polarized in the opposite direction with respect to that of the self-consistent case. These values might be obtained through an external magnetic field that pre-polarizes the electrons prior to the application of a small perturbation. Nevertheless, one should keep in mind that, to achieve such large spin polarizations, a very strong magnetic field would be needed, of the order of several hundred teslas. 

For these values of $\eta$, the imaginary part of $\omega_s$ is significantly different from zero. In particular, for $\eta=0.5$ there is a damping of the perturbation (${\rm Im}\, \omega <0$), whereas for $\eta=-0.5$ we observe an  instability (${\rm Im} \, \omega >0$) .
This behaviour can be interpreted as follows. When $\eta>H\widetilde{K}>0$, the electron polarization has the same direction as in the  self-consistent case, hence the perturbation is damped, as the system tries to return to a state that has the "natural" direction of polarization.  In contrast, when $\eta<H\widetilde{K}$ (and, in particular, when $\eta$ is negative) the system becomes unstable in an attempt to restore the "correct" direction of polarization.
When the value of $\eta$ corresponds to the self-consistent case, as in figure \ref{fig:dispersion-selfconsistent}, the system is marginally stable (${\rm Im} \, \omega \approx 0$). 
{Interestingly, in the self-consistent case the first-order correction in the electron-magnon coupling $\widetilde{K}$ disappears, see equation \eqref{eq:dispersion_weak_coupling_MB}. Hence, figure \ref{fig:dispersion-selfconsistent} shows almost no variation of the real and imaginary parts of the magnon frequency for low values of $\widetilde{K}$.

\begin{figure}[!ht]
    \centering
    \includegraphics[scale=0.4]{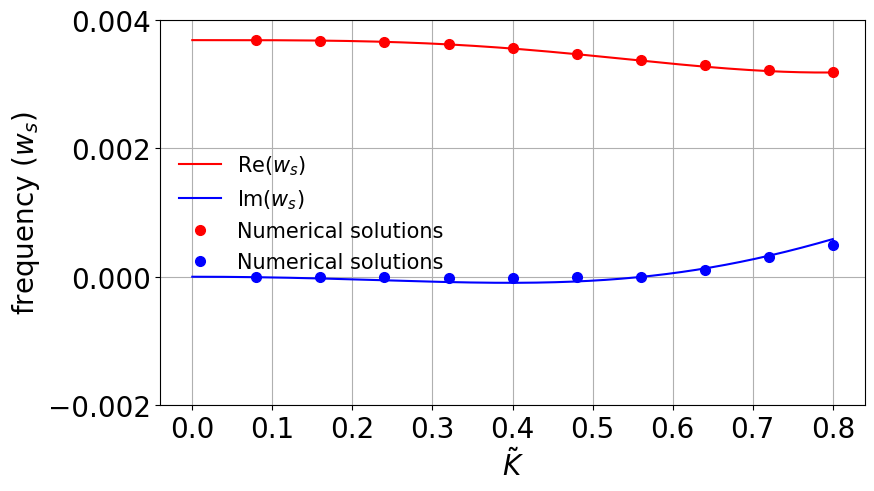}
    \caption{Magnon frequency $\omega_s$ for different normalized magnetic coupling constants $\widetilde{K}$, obtained from equation \eqref{eq:dispertion_integral_form_MB} (continuous lines, red for the real part of the frequency, blue for the imaginary part), for $k=0.5$ and $\eta=\tanh(H\widetilde{K})$. Note that the electron spin polarization $\eta$ is different for different values of $\widetilde{K}$. The dots represent numerical results obtained with the full numerical code described in the forthcoming sections. Note that, for this self-consistent case, the imaginary part remains very small with respect to the real part of the frequency.
 }
    \label{fig:dispersion-selfconsistent}
\end{figure}

\begin{figure}
\center{
\subfigure[]{\includegraphics[scale=0.35]{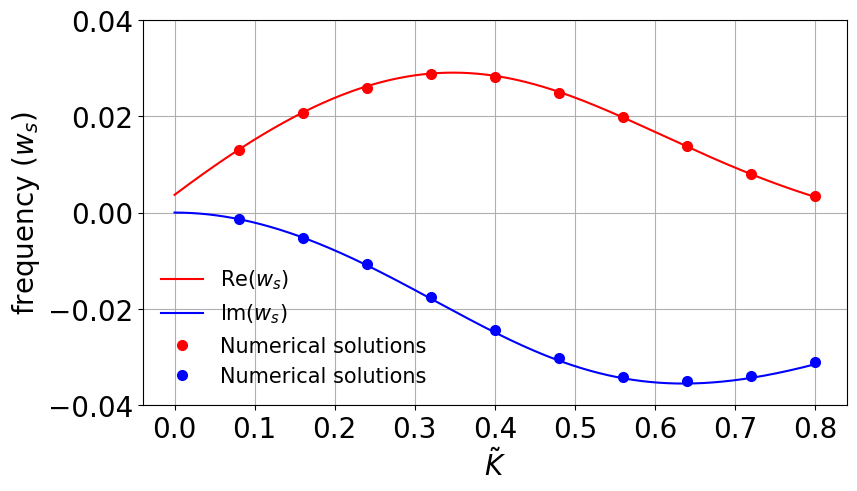}}
\subfigure[]{\includegraphics[scale=0.35]{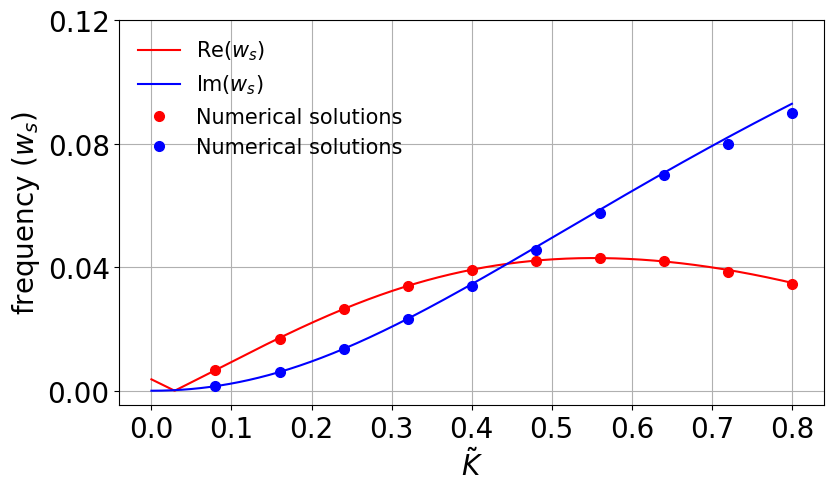}}
}
\caption{Magnon frequency $\omega_s$ for different normalized magnetic coupling constants $\widetilde{K}$, obtained from equation \eqref{eq:dispertion_integral_form_MB} (continuous lines, red for the real part of the frequency, blue for the imaginary part), for wave number $k=0.5$, and electron polarizations $\eta=0.5$ (a) and $\eta=-0.5 $ (b). The dots represent numerical results obtained with the full numerical code described in the forthcoming sections. According to the value of $\eta$, the system is either stable (a) or unstable (b). }
\label{fig:dispersion-eta}
\end{figure}

\subsection{Weak coupling regime}
\label{subsec:dispersion-weak}

From equation \eqref{eq:dispertion_analytical_form_MB}, the ion spin dispersion relation can be written as
\begin{equation}
    \omega_s = Ak^2+\frac{\widetilde{K}}{4}\left(\eta-H\widetilde{K}\right)-Z\left(\frac{\omega_s-\omega_L}{k}\right)\left[\frac{\widetilde{K}^2\eta}{8k}+\frac{\widetilde{K}^2H}{4}\left(\frac{\omega_s-\omega_L}{k}\right)\right]  \equiv G(\omega_s).
    \label{eq:iterations}
\end{equation}
This is a transcendental equation for $\omega_s$, which cannot be solved exactly, except numerically as was done in the preceding subsection.
An approximate solution to \eqref{eq:iterations} can be obtained iteratively, by starting with the solution for zero coupling, $\omega_0 = Ak^2$, then inserting this solution into the right-hand side of \eqref{eq:iterations}, which yields
\begin{equation}
    \omega_s \approx \omega_0+\frac{\widetilde{K}}{4}\left(\eta-H\widetilde{K}\right)-Z\left(\frac{\omega_0-\omega_L}{k}\right)\left[\frac{\widetilde{K}^2\eta}{8k}+\frac{\widetilde{K}^2H}{4}\left(\frac{\omega_0-\omega_L}{k}\right)\right]
  \label{eq:iteration-1}
\end{equation}
which is valid for weak coupling  $\widetilde{K}\ll1$.
This procedure can be recast as a fixed-point problem: $\omega_s^{(\ell+1)} = G(\omega_s^{(\ell)}), \,\ell\in \mathbb{N}$, with $\omega_s^{(0)} = \omega_0=Ak^2$, to obtain second- and higher-order approximations.

As the value of the dimensionless coupling constant is indeed small, $\widetilde{K} \approx 0.16$, this weak-coupling approximation  should hold for most cases of interest. 
Since $\widetilde{K}/2 \equiv \omega_L / \omega_p$, physically this approximation means that 
the electron Larmor frequency is much smaller than the plasmon frequency, specifically here: $\omega_L \approx 0.08\, \omega_p$. If we add the fact that the magnon frequency is about $\omega_0 \approx 0.008 \,\omega_p$, see equation \eqref{eq:scalediffer},  we obtain the following scaling between the three timescales that are present in this problem:
$\omega_0 \ll \omega_L \ll \omega_p $.

Under such weak-coupling approximation, \eqref{eq:iteration-1} simplifies to (restoring physical dimensions):
\begin{equation}
    \omega_{s}=\omega_0+\frac{\omega_L}{2}\left(\eta-H\widetilde{K}\right)
    \left[
    1+ \frac{2\omega_L}{v_{th} k}D\left(-\frac{\omega_L}{k}\right)
    -i\sqrt{\pi} \,\frac{\omega_L}{v_{th} k} \exp\left(-\frac{\omega_L^2}{v_{th}^2 k^2}\right) \right] ,
    \label{eq:dispersion_weak_coupling_MB}
\end{equation}
where we used the fact that ${\rm Im}\, Z(x)=\sqrt{\pi}e^{-x^2}$ ($x\in\mathbb{R}$)
when evaluated on the real axis (i.e., $x\in\mathbb{R}$) \cite{Fried1961} and where $D$ is the Dawson function.
By looking at the imaginary part of $\omega_{s}$, two regimes clearly appear. If 
$\eta < H\widetilde{K}$, the imaginary part is positive, so that the magnetic perturbation is unstable and grows exponentially until the nonlinear regime is reached. If $\eta > H\widetilde{K}$, then the perturbation is damped and disappears after a few oscillations. 
Interestingly, the value of $\eta$ that discriminates between these two regimes, i.e., $\eta = H\widetilde{K}$, is precisely the value that corresponds to the self-consistent case, $\eta = \tanh(H\widetilde{K})$, in the approximation where $\widetilde{K} \ll 1$.

{The form of the spin dispersion relation \eqref{eq:dispersion_weak_coupling_MB} reveals that all the magnetic terms in the Vlasov model \eqref{f_0_norm}-\eqref{f_i_norm} are important and cannot be neglected: the Zeeman terms proportional to $H$, the electron precession term proportional to $\mathbf{B}$ and hence to $\widetilde{K}$, as well as the initial electron spin polarization $\eta$. The subtle interplay between these terms determines the stable or unstable nature of the linear response.
In contrast, as we have seen, the electric charge response is completely decoupled from the spin response, at least in the linear regime. Hence, one could neglect the electric field terms in \eqref{f_0_norm}-\eqref{f_i_norm} (or set the initial electric perturbation to zero) and the spin response would remain unchanged. However, the plasmon oscillations would be lost.
}

The results for both the exact dispersion relation \eqref{eq:dispertion_integral_form_MB} and the approximate formula \eqref{eq:dispersion_weak_coupling_MB} are shown in figure \ref{fig:Disp_MB_K_t.png} for a self-consistent case.
As expected, the agreement is good for values up to $\widetilde{K} \approx 1$, which cover most realistic values of the coupling constant.
Finally, from \eqref{eq:dispersion_weak_coupling_MB}, one can compute the maximum imaginary part with the parameters used in figure \ref{fig:Disp_MB_K_t.png}. Since $\tanh(H\widetilde{K})= H\widetilde{K} - (H\widetilde{K})^3/3+ \mathcal{O}((H\widetilde{K})^5)$, the imaginary part of $\omega_s$ is proportional to $\widetilde{K}^5 e^{-(\widetilde{K}/2k)^2}$. The maximum is then reached for $\widetilde{K}=\sqrt{10}\,k \approx 1.58$, which is also in agreement with the exact dispersion relation.

Finally, in figure \ref{fig:Disp_MB_k} we show the dependence of the magnon frequency on the wavenumber $k$, comparing the full dispersion relation with its first order \eqref{eq:iteration-1} and second order approximations.

\begin{figure}[H]
    \centering
    \includegraphics[scale=0.6]{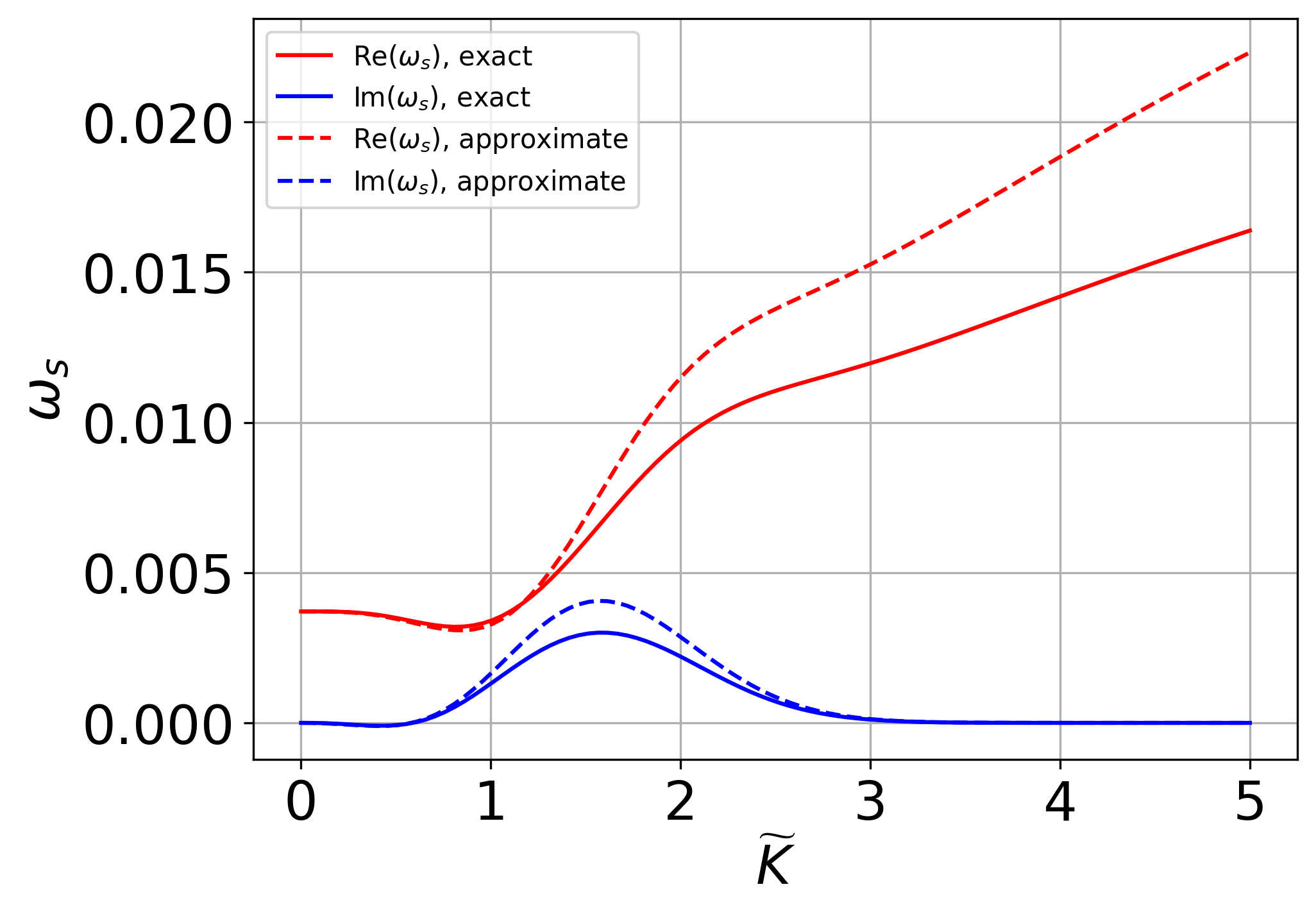}
    \caption{Magnon frequency $\omega_s$ as a function of the magnetic coupling constant $\widetilde{K}$, for the self-consistent case $\eta = \tanh(H\widetilde{K})$, and wavenumber $k=0.5$. 
    The solid lines represent the full dispersion relation computed numerically using  \eqref{eq:dispertion_integral_form_MB}, while the dashed lines are obtained with the simplified relation \eqref{eq:iteration-1}. Red lines refer to the real part of $\omega_s$, whereas blue lines refer to the imaginary part.
 }
    \label{fig:Disp_MB_K_t.png}
\end{figure}

\begin{figure}[H]
    \centering
    \includegraphics[scale=0.6]{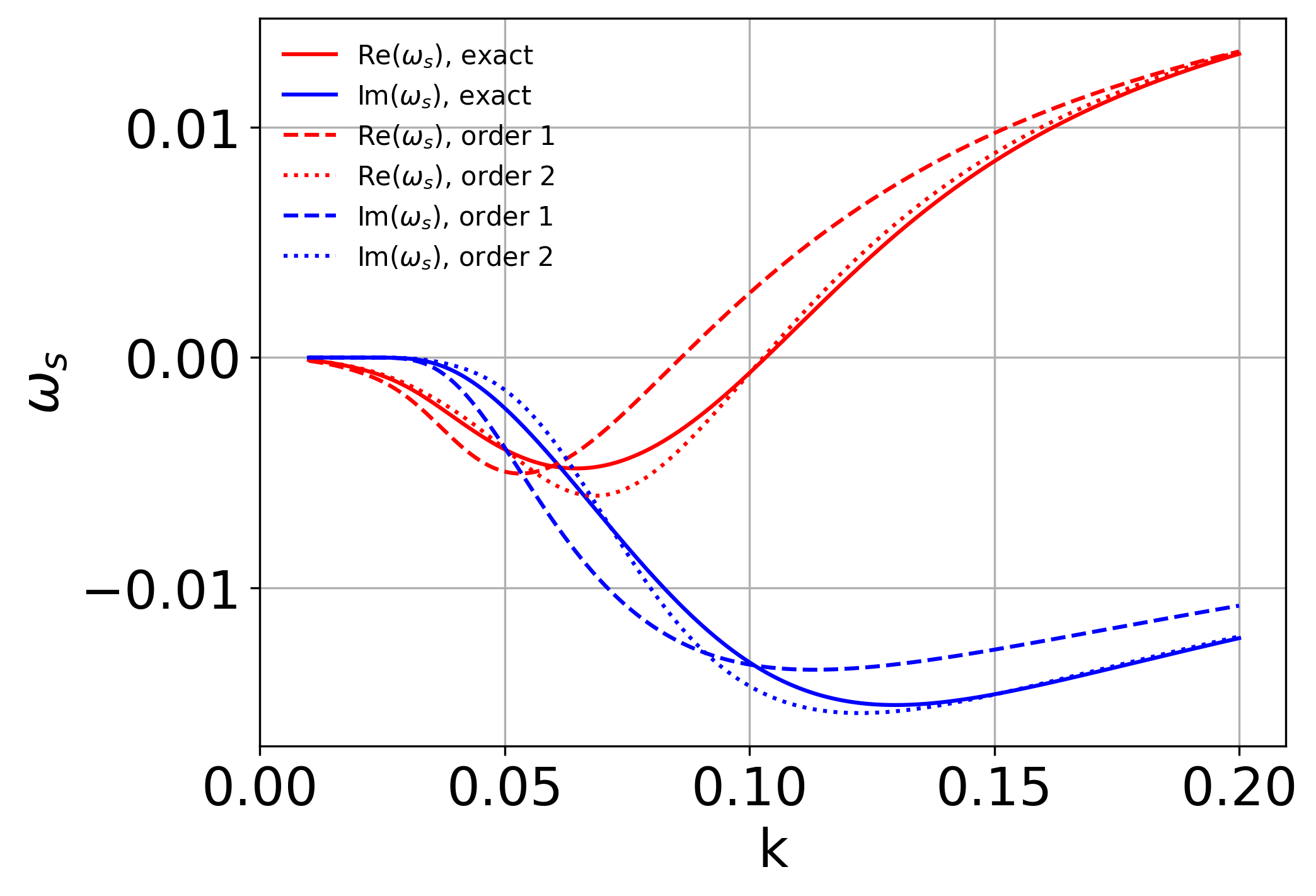}
    \caption{Magnon frequency $\omega_s$ as a function of the magnon wavenumber $k$, for  electron polarization $\eta = 0.5$, and magnetic coupling constant $\widetilde{K}=0.16$. 
    The solid lines represent the full dispersion relation computed numerically using  \eqref{eq:dispertion_integral_form_MB} where the integral and the derivative are not with respect to $\widetilde{K}$ but $k$. The other lines refer to the approximate linear theory obtained from \eqref{eq:iterations} at first order (dashed lines, given explicitely by equation \eqref{eq:iteration-1}) and second order (dotted lines).    
    Red lines refer to the real part of $\omega_s$, whereas blue lines refer to its imaginary part.
 }
    \label{fig:Disp_MB_k}
\end{figure}


\section{Numerical method}
\label{section:numericalmethod}

In this section, we present the numerical method used to solve the system of equations \eqref{f_0}-\eqref{poisson}. 
The method is based on a Hamiltonian splitting technique, together with a phase space discretization that uses Fourier spectral approximation for the space variable $x$ and finite volumes (PSM) for the velocity direction $v$, as in \cite{Crouseilles2023,laser2020}. 

The Hamiltonian can be split into five parts:
\begin{equation}\label{eq:splitHamiltonianparts}
\mathcal{H}=\mathcal{H}_{v}+\mathcal{H}_{E}+\mathcal{H}_{S_1}+\mathcal{H}_{S_2}+\mathcal{H}_{S_3},
\end{equation}
where 
\begin{equation}
\label{eq:splitHamiltonianparts_def}
\begin{aligned}
\mathcal{H}_{v}& = \frac{1}{2}\int v^2 f_0 \mathrm{d}{ x}\mathrm{d}{v},\quad
\mathcal{H}_{E} = \frac{1}{2} \int \Big(\frac{\partial V_H}{\partial x}\Big)^2 \mathrm{d}{x},\\
\mathcal{H}_{S_i} &= H \int   f_i B_i  \mathrm{d}x\mathrm{d}v+ AH \int \Big(\frac{\partial {S^{}_i}}{\partial x}\Big)^2 \mathrm{d}{x}, \;\; i=1, 2, 3.
\end{aligned}
\end{equation}
Let us remark that in this decomposition, $\mathcal{H}_{S_i} = \mathcal{H}_{Z,i} + \mathcal{H}_{spin,i}$ where the Zeeman energy  $\mathcal{H}_{Z,i}$ and the spin energy $\mathcal{H}_{spin,i}$ are given by  \eqref{hamiltonian_dim}. 
According to the  Hamiltonian splitting, we get from \eqref{eq:possionbracketequation}:
\begin{equation}
\label{full_model}
\frac{\partial \mathcal{Z}}{\partial t} = \{ \mathcal{Z}, \mathcal{H}_v \}+\{ \mathcal{Z}, \mathcal{H}_E \}+\{ \mathcal{Z}, \mathcal{H}_{S_1} \}+\{ \mathcal{Z}, \mathcal{H}_{S_2} \}+\{ \mathcal{Z}, \mathcal{H}_{S_3} \},
\end{equation}
which induces the five subsystems
\begin{equation}
\frac{\partial \mathcal{Z}}{\partial t} = \{ \mathcal{Z}, \mathcal{H}_v \},\ \frac{\partial \mathcal{Z}}{\partial t}=\{ \mathcal{Z}, \mathcal{H}_E \},\ \frac{\partial \mathcal{Z}}{\partial t}=\{ \mathcal{Z}, \mathcal{H}_{S_1} \},\ 
\frac{\partial \mathcal{Z}}{\partial t}=\{ \mathcal{Z}, \mathcal{H}_{S_2} \},\ 
\frac{\partial \mathcal{Z}}{\partial t}=\{ \mathcal{Z}, \mathcal{H}_{S_3} \}. 
\end{equation}
As detailed in the Appendix \ref{split_app}, each subsystem can be solved exactly, which means that the error in time only originates from the time splitting and then can be controlled by using high order splittings. 

Denoting $\varphi_t^{\mathcal{H}_\star}({\mathcal Z}(0))$, the exact solution at time $t$ of $\partial_t \mathcal{Z} = \{ \mathcal{Z}, \mathcal{H}_\star \}$ (where $\star=v, E, S_1, S_2, S_3$,) with the initial condition ${\mathcal Z}(t=0)$, 
the solution of the full model \eqref{full_model} 
is thus approximated by 

\begin{equation}
    {\mathcal Z}(t) = \Big( {\Pi_{\star=v, E, S_1, S_2, S_3} } \;\;   
    \varphi^{\mathcal{H}_\star}_t  \Big) \,{\mathcal Z}(0).      
\end{equation}
This is a first-order splitting, but higher order splittings  
could also be derived. Since the splitting involves here 
5 steps, we will restrict ourselves to the Strang scheme 
\begin{equation}
    {\mathcal Z}(t) = \Big(\varphi^{\mathcal{H}_v}_{t/2}\circ \varphi^{\mathcal{H}_E}_{t/2}\circ \varphi^{\mathcal{H}_{S_1}}_{t/2}\circ \varphi^{\mathcal{H}_{S_2}}_{t/2}\circ \varphi^{\mathcal{H}_{S_3}}_{t}\circ \varphi^{\mathcal{H}_{S_2}}_{t/2}\circ \varphi^{\mathcal{H}_{S_1}}_{t/2}\circ \varphi^{\mathcal{H}_E}_{t/2}\circ \varphi^{\mathcal{H}_v}_{t/2} \Big)  \, {\mathcal Z}(0).  
\end{equation}
Such Hamiltonian splitting are known to maintain long term accuracy of the total energy. Moreover, in our case, one can also prove the scheme preserves {\it exactly} the norm of ${\mathbf S}$.

\begin{pro}
The update \eqref{solutionofHS1}, \eqref{solutionofHS2} and \eqref{solutionofHS3} 
of the spin ${\bf S}$ through the hamiltonian splitting discretization preserves the norm of the spin: $||{\bf S}(\cdot, t)||=1$ if $||{\bf S}^0(\cdot)||=1$.
\end{pro}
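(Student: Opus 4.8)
The plan is to track the pointwise quantity $\|\mathbf{S}(x,\cdot)\|$ through a single time step and then conclude by induction on the step index. A full Strang step is a composition of the five exact sub-flows $\varphi^{\mathcal{H}_v}$, $\varphi^{\mathcal{H}_E}$, $\varphi^{\mathcal{H}_{S_1}}$, $\varphi^{\mathcal{H}_{S_2}}$, $\varphi^{\mathcal{H}_{S_3}}$, so it suffices to show that each of them leaves $\|\mathbf{S}(x,\cdot)\|$ invariant for every $x$. For the two charge sub-flows this is immediate: $\mathcal{H}_v$ and $\mathcal{H}_E$ depend on $f_0$ (and, through Poisson, on $V_H$) but not on $\mathbf{S}$, hence $\delta\mathcal{H}_v/\delta\mathbf{S}=\delta\mathcal{H}_E/\delta\mathbf{S}=0$ and the only term of the bracket \eqref{poisson_bracket} that can move $\mathbf{S}$ — the last one — vanishes; thus $\{\mathbf{S},\mathcal{H}_v\}=\{\mathbf{S},\mathcal{H}_E\}=0$ and $\varphi^{\mathcal{H}_v}$, $\varphi^{\mathcal{H}_E}$ act as the identity on $\mathbf{S}$. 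It remains to treat the three spin sub-steps.

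Fix $i\in\{1,2,3\}$ and write $\mathbf{e}_i$ for the $i$-th coordinate unit vector. By \eqref{eq:splitHamiltonianparts_def} and $\mathbf{B}=-\widetilde{K}\mathbf{S}/2$, the Hamiltonian $\mathcal{H}_{S_i}$ depends on the spin field only through the single component $S_i$; hence $\delta\mathcal{H}_{S_i}/\delta\mathbf{S}=g_i(x)\,\mathbf{e}_i$, where $g_i$ is, up to fixed constants, a combination of $\int f_i\,\mathrm dv$ and $\partial_x^2 S_i$. Inserting this into the spin part of \eqref{poisson_bracket} gives the subsystem equation $\partial_t\mathbf{S}=\Omega_i(x)\,\big(\mathbf{e}_i\times\mathbf{S}\big)$ with $\Omega_i\propto g_i$, a pure precession about the axis $\mathbf{e}_i$. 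From the elementary identity $\mathbf{a}\cdot(\mathbf{b}\times\mathbf{a})=0$ one gets $\frac{\mathrm d}{\mathrm dt}\|\mathbf{S}(x,\cdot)\|^2=2\,\mathbf{S}\cdot\big(\Omega_i(\mathbf{e}_i\times\mathbf{S})\big)=0$ for every $x$, so $\|\mathbf{S}(x,\cdot)\|$ is an invariant of the $i$-th subsystem; since the updates \eqref{solutionofHS1}, \eqref{solutionofHS2}, \eqref{solutionofHS3} are exactly the time-advance maps of these subsystems (Appendix \ref{split_app}), each of them preserves $\|\mathbf{S}(x,\cdot)\|$. One can make this even more concrete: since $\mathbf{e}_i\times\mathbf{S}$ has vanishing $i$-th component, $S_i$ — and hence $\partial_x^2 S_i$ — stays frozen along the flow, and the equation for $f_i$ in this subsystem (obtained from \eqref{f_i_norm} by keeping only the $B_i$-terms) has a right-hand side of the form $\partial_v(H\,\partial_x B_i\,f_0)$, so $\int f_i\,\mathrm dv$ is frozen too; therefore $\Omega_i(x)$ is constant in time along the flow, and the update is nothing but the pointwise rotation of $\mathbf{S}(x,\cdot)$ about $\mathbf{e}_i$ by a fixed angle proportional to $\Omega_i(x)$, an orthogonal transformation.

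Putting the pieces together, one Strang step is a composition of maps — two of which are the identity on $\mathbf{S}$ and the others rotations applied independently at each $x$ — each of which preserves $\|\mathbf{S}(x,\cdot)\|$; hence so does the composite, and an induction on the step index yields $\|\mathbf{S}(\cdot,t)\|=\|\mathbf{S}^0(\cdot)\|=1$ (the argument does not use the Strang ordering, so it applies equally to the first-order Lie splitting or any composition of the five sub-flows). The one point that genuinely requires the structure of the model — rather than the bookkeeping above — is the claim that $\Omega_i$ is time-independent along the $i$-th sub-flow, i.e. that the Hamiltonian splitting decouples the dynamics so that under $\mathcal{H}_{S_i}$ the moment $\int f_i\,\mathrm dv$ is conserved; this is where one uses the precise form of the bracket \eqref{poisson_bracket} together with the fact that $\mathbf{B}$, and hence $B_i$, is independent of $v$. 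Everything else — the cross-product identity, the fact that a composition of isometries is an isometry, and the induction — is routine.
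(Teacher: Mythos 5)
Your proposal is correct and follows essentially the same route as the paper: the decisive observation in both is that each spin sub-step \eqref{solutionofHS1}--\eqref{solutionofHS3} acts on $\mathbf{S}(x,\cdot)$ as a rotation about a fixed coordinate axis with an $x$-dependent but time-independent angle (the paper phrases this via the orthogonal matrix $\exp(\alpha J t)$ embedded in a $3\times 3$ block matrix $\mathcal{A}$), hence is an isometry at each point $x$. Your additional remarks --- the vanishing of $\{\mathbf{S},\mathcal{H}_v\}$ and $\{\mathbf{S},\mathcal{H}_E\}$, the identity $\mathbf{S}\cdot(\mathbf{e}_i\times\mathbf{S})=0$, and the verification that $S_i$ and $\int f_i\,\mathrm{d}v$ are frozen along the $i$-th sub-flow --- are all consistent with what the paper establishes in Appendix \ref{split_app} before stating the proposition.
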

    \begin{proof}
      By \eqref{solutionofHS1}, \eqref{solutionofHS2} and \eqref{solutionofHS3}, 
        the vector spin ${\bf S}$ is updated through the multiplication 
        of a matrix $\exp(\alpha J t)$ ($J$ being the symplectic matrix) which is a rotation matrix of angle $(-\alpha)$  in $\mathbb{R}^2$. Let introduce the $3\times 3$ matrix $\mathcal{A}$ corresponding to \eqref{solutionofHS3}
        \begin{equation}
            \mathcal{A}=\begin{pmatrix}
                \exp(\alpha_{\mathcal{H}_{S_3}} J  t)& {\bf 0}^T \\
                {\bf 0} & 1
            \end{pmatrix}
        \end{equation}
        with ${\bf 0}=(0,0)$ and $\alpha_{\mathcal{H}_{S_3}}=\frac{\widetilde{K}}{4}\int f_3^0 \mathrm{d}{ v}+A \partial^2_x {S}^{0}_3$. 
        We then reformulate \eqref{solutionofHS3} as 
        ${\bf S}(x, t)=\mathcal{A}{\bf S}^0(x)$ 
        from which we easily deduce the norm is preserved. The same is true for  \eqref{solutionofHS1} and  \eqref{solutionofHS2}. 
        We finally deduce $||S(\cdot, t)||=1$ as long as $||S^0(\cdot)||=1$.
    \end{proof}

\section{Numerical results}
\label{section:results}

In this section, we present some numerical results obtained with the nonlinear code described in section \ref{section:numericalmethod}. The results will also be compared to the analytical linear response, as detailed in section \ref{section:linear}.
In the results presented below, the numerical parameters are chosen as follows (nondimensional units are used everywhere): Number of points in space and velocity $N_x=119, \;  N_v=1024$, time-step $\Delta t=0.1$, variable ranges in the phase space: $v\in [-5, 5],  \; x\in [0, 2\pi/k]$,  perturbation wavenumber $k=0.5$.

The initial condition is a periodic perturbation of the equilibrium $f_0^{(0)}= {\cal F}, \; f_3^{(0)}= \eta {\cal F},\;  f_1^{(0)}=f_2^{(0)}=S_1^{(0)}=S_2^{(0)}=0,\;  S_3^{(0)}=1$, 
where ${\cal F}$ is a spatially homogeneous equilibrium (either a Maxwell-Boltzmann or a two-stream distribution).
This equilibrium represents ions that are fully polarized in the $\ell=3$ direction, while the electrons are partially polarized along the same direction, with a polarization rate equal to $\eta$.

After the perturbation, the initial condition is as follows
\begin{equation}
\begin{aligned}
f_0(t=0^{+}, x, v) &= {\cal F}(v) (1+\varepsilon\cos(kx)), \\
f_1(t=0^{+}, x, v) &= \eta {\cal F}(v) \varepsilon\cos(kx), \\
f_2(t=0^{+}, x, v) &= \eta {\cal F}(v)\varepsilon\sin(kx), \\
f_3(t=0^{+}, x, v) &= \eta {\cal F}(v)(1+\varepsilon\cos(kx)), \\
S_1(t=0^{+}, x) &= \frac{\varepsilon}{\sqrt{1+\varepsilon^2}}\sin(kx)  , \\
S_2(t=0^{+}, x) &= \frac{\varepsilon}{\sqrt{1+\varepsilon^2}}\cos(kx)    , \\
S_3(t=0^{+}, x) &= \frac{1}{\sqrt{1+\varepsilon^2}},  
\end{aligned}
\end{equation}
where the amplitude of the perturbation is $\varepsilon=10^{-3}$. 
Note that the perturbation is chosen such that: $\|\mathbf{S}(t=0, x)\|^2 = S_1^2(0, x)+S_2^2(0, x)+S_3^2(0, x) = 1$.
The nondimensional physical constants are those defined in section \ref{subsec:normalized}, i.e., 
 $A = 0.0148$ (ion-ion magnetic coupling), $\widetilde{K}= 0.161$ (ion-electron magnetic coupling), and $H = 0.339$ (scaled Planck constant). The numerical results will be expressed in terms of the  units defined in section \ref{subsec:normalized}. All logarithms are Neperian (base $e$).

\begin{table}
\centering
\begin{tabular}{|c|c|c|c|}
\hline 
  MB1   & $\eta=\tanh(\widetilde{K}H)\approx 0.0547$ & $\omega_s=0.003680-2.739\times 10^{-5}i$ & $\varepsilon=10^{-3}$ \\
  \hline
   MB2  &  $\eta=0.5$ & $\omega_s=0.02088-0.005253i$ & $\varepsilon=10^{-3}$\\
  \hline
   MB3 & $\eta=-0.5$ & $\omega_s=0.01725+0.006162i$ & $\varepsilon=10^{-6}$\\
  \hline
\end{tabular}
\caption{Main numerical and physical parameters of the three runs that utilize a Maxwell-Boltzmann (MB) equilibrium: initial electron spin polarization $\eta$, ion spin frequency $\omega_s$, and initial perturbation $\varepsilon$. The values of $\omega_s$ are those of the linear response calculation using the ZEAL code. Other values are: $k=0.5$, $N_x=119$, $N_v=1024$, and $v_{\max}=5$.}
\label{tableMB}
\end{table}

\subsection{Maxwell-Boltzmann (MB) equilibrium} \label{subsec:MB}
Here, we consider the Maxwell-Boltzmann equilibrium \eqref{eq:MBequilibrium} that was used for the linear analysis. 
We will analyze three case, for different electron polarizations $\eta$. In the first case (MB1), the polarization is taken to be self-consistent with the ions, i.e., the electron polarization is due solely to the magnetic field generated by the ions, so that $\eta =\tanh(\widetilde{K}H)$ (see Appendix \ref{appendix:equilibrium}).
In the remaining two cases (MB2 and MB3), the polarization will be chosen arbitrarily as $\eta = \pm 0.5$. This polarization may be achieved through the application of an external magnetic field.
The parameters of these Maxwell-Boltzmann simulations are summarized in Table \ref{tableMB}.

\paragraph{MB1.} The roots of the dispersion relation for charges ($\omega_e$) and spins ($\omega_s$), calculated using the ZEAL code, are the following 
\begin{eqnarray}
\label{disp_1f}
\omega_e&=& 1.225-i \, 0.03626 \\
\label{disp_1s}
\omega_s&=& 0.003680- i \, 2.739\times 10^{-5}. 
\end{eqnarray} 
We remark that: (i) the real part of $\omega_e$ is close to the plasma frequency (equal to unity here), while its imaginary part is much smaller, in accordance with the Bohm-Gross dispersion relation; (ii) the real part of  $\omega_s$ is much smaller than the plasma frequency, in accordance with \eqref{eq:scalediffer}, while its imaginary part is even smaller, signifying the almost absence of spin damping.


\begin{figure}
\center{
\subfigure[]{\includegraphics[scale=0.33]{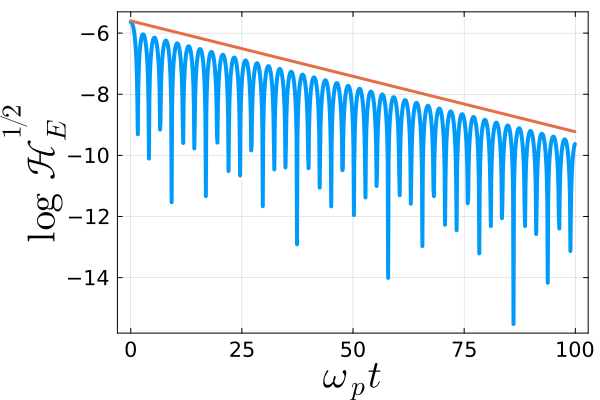}}
\subfigure[]{\includegraphics[scale=0.33]{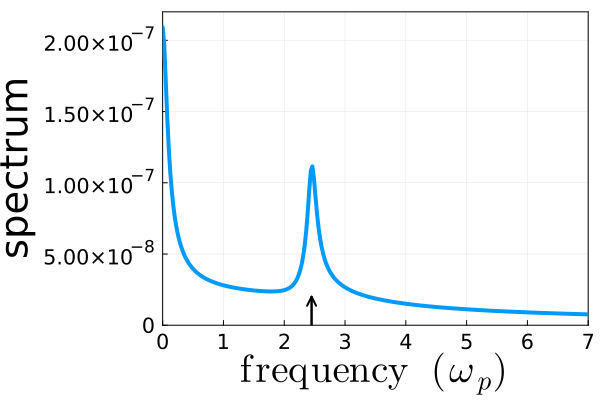}}
\subfigure[]{\includegraphics[scale=0.33]{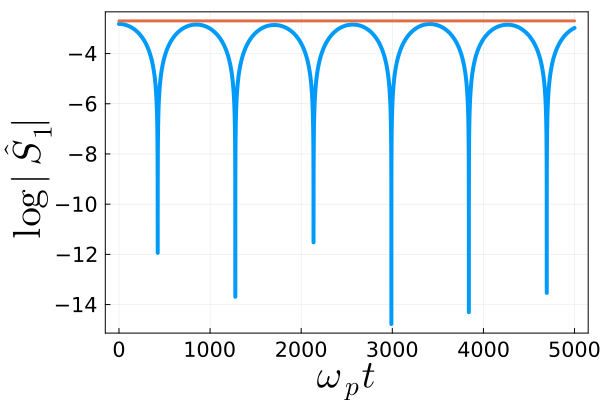}}
\subfigure[]{\includegraphics[scale=0.33]{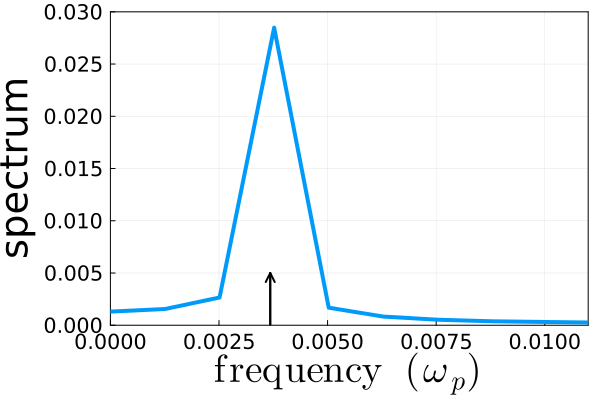}}
}
\caption{MB1 simulation.
Time history of the square root of the electric energy  ${\cal H}_E^{1/2}$ (given by \eqref{eq:splitHamiltonianparts_def}), in semi-$\log$ scale (a)
and corresponding frequency spectrum (b). The red straight line represents the linear damping rate given in \eqref{disp_1f}.
Time history of the absolute value of the real part of the first Fourier mode of the ion spin $\hat{S}_1(k,t)$ in semi-$\log$ scale (c) and corresponding frequency spectrum (d). The red straight line corresponds to zero damping, see \eqref{disp_1s}. The arrows in the spectral plots correspond to the results of linear response theory.}
\label{MB_run1}
\end{figure}

In figures \ref{MB_run1}, we plot the time evolution of some physical quantities associated to the electron charge [panels (a) and (b)] and to the ion spin [panels (c) and (d)]. The Coulomb electric energy decays exponentially with a rate ${\rm Im}\, \omega_e$ very close to the one predicted by the linear response analysis (Landau damping).  The real part of the frequency is also very close to the analytical prediction of \eqref{disp_1f}, with an additional factor of 2 due to the modulus.

In figure \ref{MB_run1} (c),(d),
we show the evolution of the absolute value of the real part of the first Fourier mode of the ion spin $S_1(x,t)$, i.e. $\hat{S}_1(k,t)$, with $k=0.5$ in this case.
In agreement with \eqref{disp_1s}, this mode is virtually undamped (the red line is horizontal and corresponds to zero damping). The corresponding frequency spectrum peaks in the vicinity of the theoretical magnon frequency ${\rm Re}\,  \omega_s$. Note that, due to the great disparity between the magnon and the plasmon frequencies, only a few ($\approx 6$) magnon frequencies could be observed, resulting in a limited accuracy for the magnon spectrum.

In addition to the good agreement with the linear theory for $\omega_e$ and $\omega_s$, we also emphasize that the modulus of the ion spin vector $\|{\bf S}(t, x)\|$ is preserved up to machine accuracy and that the (relative) total energy is preserved up to $10^{-7}$.

\paragraph{MB2.}
For this second test, we consider an initial condition with an electron spin polarization rate $\eta=0.5$. This can be achieved through an external magnetic field $B_3^{\rm ext}$ directed along the same direction as the ion polarization. The positive value of $\eta$ correspond to the "natural" polarization direction for the electrons, parallel to that of the ions and oriented in the same way, as in the self-consistent case. Hence, we expect this equilibrium to be magnetically stable.

As was mentioned earlier, the charge dynamics is decoupled from the spin dynamics in the linear regime, hence the electric response (not shown here) is the same as that of figure \ref{MB_run1}, displaying plasmonic oscillations and Landau damping.

The spin response is depicted in figure \ref{MB_run2}, where we show the first Fourier moments of the ion and electron spins and their frequency spectra. 
In this case, a clear damping of the magnon mode is observed, which is in good agreement with the roots of the dispersion relation: $\omega_s= 0.02088 - i \, 0.005253$, which is to be compared to the damping rate obtained from the simulation, $\gamma = -0.005186$.
The real part of the frequency, see figure \ref{MB_run2}(b), shows a peak near ${\rm Re}\, \omega_s \approx 0.02$, also in good accordance with the linear response result.

The electron spin density ${\mathbf M}$, shown in figure \ref{MB_run2}(c)-(d), follows the same evolution as the ions, with very similar frequency and damping rate.


\begin{figure}
\center{
\subfigure[]{\includegraphics[scale=0.35]{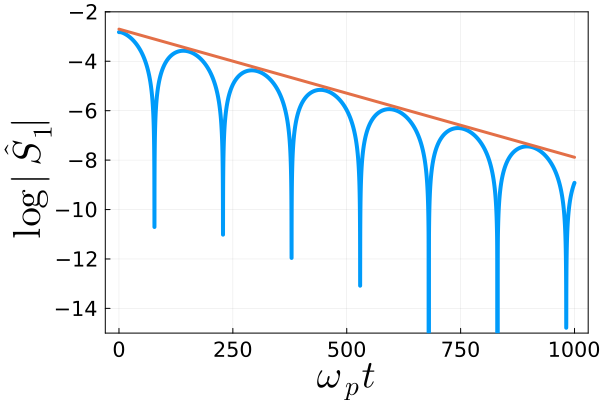}}
\subfigure[]{\includegraphics[scale=0.35]{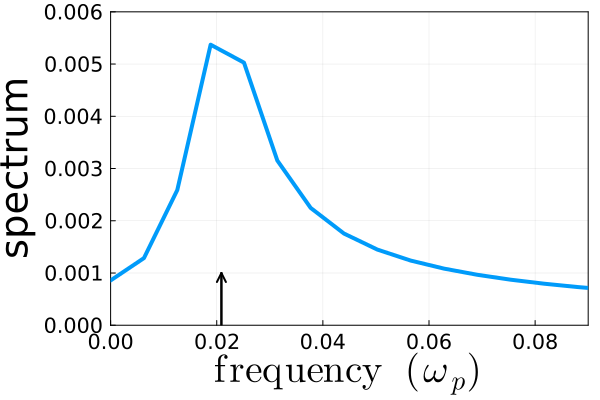}}
\subfigure[]{\includegraphics[scale=0.35]{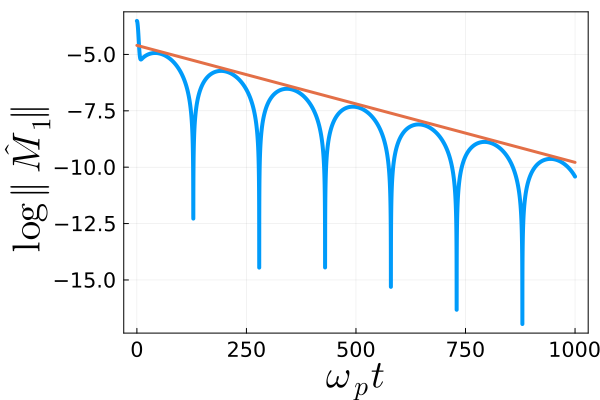}}
\subfigure[]{\includegraphics[scale=0.35]{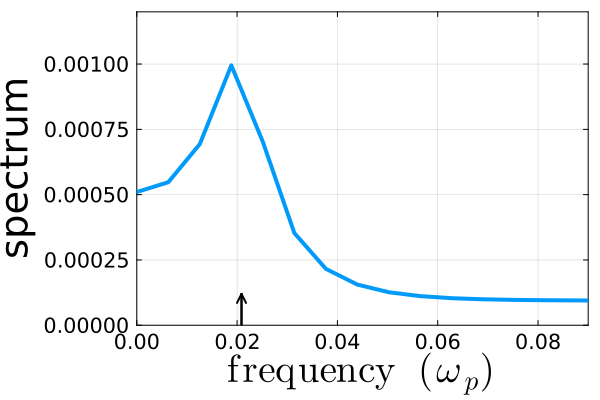}}
}
\caption{MB2 simulation ($\eta=0.5$).
Time history of the absolute value of the real part of the first Fourier mode of the ion spin $\hat{S}_1(k,t)$ in semi-$\log$ scale (a) and corresponding frequency spectrum (b). The slope of the red straight line is $-0.005186$, very close to the linear response result given in Table \ref{tableMB}. The peak of the frequency spectrum also matches the linear result ${\rm Re\, \omega_s} = 0.02088$ (indicated by an arrow on the plot) with good accuracy.    
Panels (c) and (d) show the same quantities for the electronic spin mode $\hat{M}_1(k,t)$. The real and imaginary parts of the frequency are the same as for the ion spins.}
\label{MB_run2}
\end{figure}

\paragraph{MB3.}
Here, we consider an electron gas which is initially polarized in the opposite direction to the one corresponding to the self-consistent case. In this case, the polarization rate is negative, and we take $\eta=-0.5$. Since the electron polarization is opposite to the self-consistent scenario, we expect the system to be unstable, as it attempts to restore the "natural"  direction of polarization.

In figures \ref{MB_run3}(a)-(b), we plot the evolution of the first Fourier mode of the ion spin and its frequency spectrum. The real part of the frequency and the instability rate are very close to the linear response result $\omega_s=0.01725+ i\, 0.006162$. 
After about $2000\, \omega_p^{-1}$, the instability saturates nonlinearly.
The electric field evolution is the same as in figure \ref{MB_run1} (a).

The electron spin density ${\mathbf M}$, shown in figure \ref{MB_run3}(c)-(d) follows the same evolution as the ions, with very similar frequency and instability rate.


\begin{figure}
\center{
\subfigure[]{\includegraphics[scale=0.35]{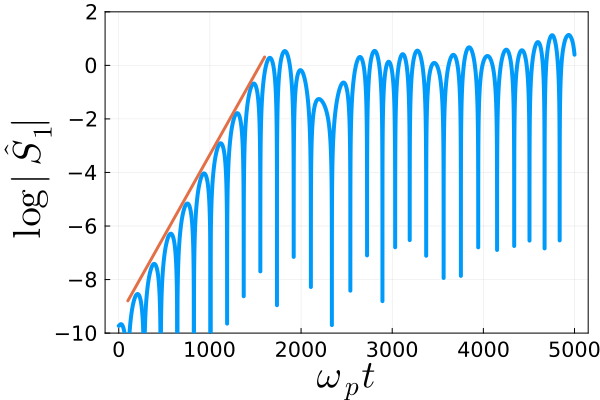}}
\subfigure[]{\includegraphics[scale=0.35]{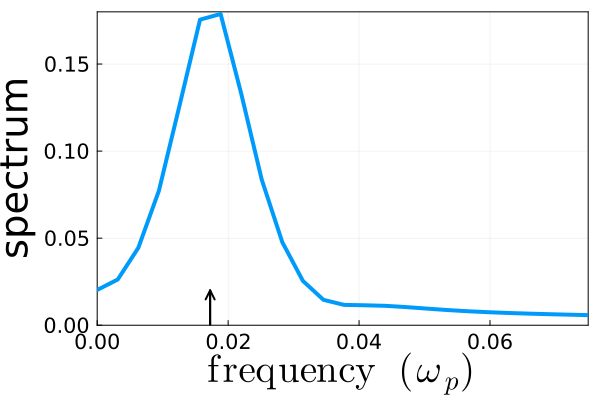}}
\subfigure[]{\includegraphics[scale=0.35]{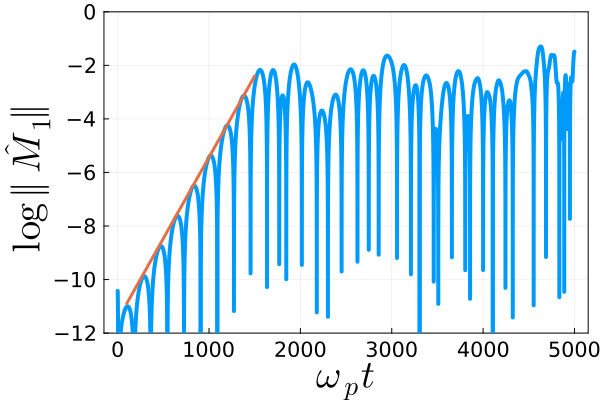}}
\subfigure[]{\includegraphics[scale=0.35]{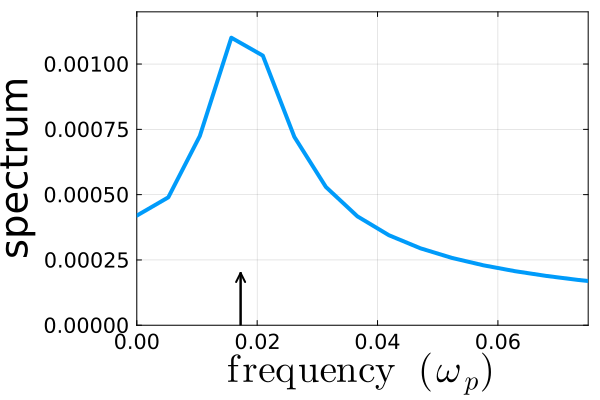}}
}
\caption{MB3 simulation ($\eta=-0.5$).
Time history of the absolute value of the real part of the first Fourier mode of the ion spin $\hat{S}_1(k,t)$ in semi-$\log$ scale (a) and corresponding frequency spectrum (b). The slope of the red straight line is $0.00607$, very close to the linear response result given in Table \ref{tableMB}. The peak of the frequency spectrum also matches the linear result ${\rm Re\, \omega_s} = 0.01725$ (indicated by an arrow on the plot) with good accuracy.
Panels (c) and (d) show the same quantities for the electronic spin mode $\hat{M}_1(k,t)$. The real and imaginary parts of the frequency are the same as for the ion spin.}
\label{MB_run3}
\end{figure}


\subsection{Two-stream (TS) equilibrium} \label{subsec:TS}
In this subsection, we consider a two-stream equilibrium for the initial electron distribution
$${\cal F}(v)=\frac{1}{2\sqrt{\pi}}(e^{-(v-u)^2} + e^{-(v+u)^2}).$$
This equilibrium can be either stable or unstable for the charge dynamics, depending on the value of the stream velocity $u$. In the numerical runs reported below, we have chosen $u=1.4$, which corresponds to a stable case (run TS1), and $u=3$ which corresponds to an unstable case (run TS2 and TS3).  
In TS1 and TS2, we use the self-consistent value for the electron spin polarization, $\eta = \tanh(\widetilde{K}H)\approx 0.0547$, while in TS3 and TS4 we force a spin instability by setting $\eta= - 0.5$.

The parameters of these runs are summarized in Table \ref{tableTS}.

\begin{table}
\centering
\begin{tabular}{|c|c|c|c|}
\hline
  TS1   & $\eta=\tanh(\widetilde{K}H)\approx 0.0547$ & $u=1.4$ & $N_v=1024, v_{\max}=8, \varepsilon=10^{-3}$\\
    \hline
   TS2  &  $\eta=\tanh(\widetilde{K}H)\approx 0.0547$ & $u=3$ & $N_v=1536, v_{\max}=12, \varepsilon=10^{-6}$\\
     \hline
   TS3  &  $\eta=-0.5$ & $u=3$ & $N_v=512, v_{\max}=14, \varepsilon=10^{-6}$\\
    \hline
   TS4  &  $\eta=-0.5$ & $u=1.4$ & $N_v=512, v_{\max}=14, \varepsilon=10^{-6}$\\
    \hline
\end{tabular}
\caption{Main numerical and physical parameters of the runs that utilize a two-stream equilibrium. Other values are: $k=0.2$ and $N_x=129$.}
\label{tableTS}
\end{table}

\paragraph{TS1.} In this case, the stream velocity is weak ($u=1.4$) so that the charge sector of the dynamics is basically undamped, as seen on figure  \ref{TSI_run4}(a) for the electric field. 
The spin sector is more interesting, both for the ions and the electrons, which are rather strongly damped at a rate $\approx 6.3 \times 10^{-4}$. 
This is in contrast with the corresponding Maxwell-Boltzmann simulation (MB1, figure \ref{MB_run1}) where the spin mode was very weakly damped. Although the wavenumber is not the same ($k=0.5$ for MB1 and $k=0.2$ for TS1), it appears that the equilibrium profile has a strong impact on the stability properties of the ion magnon mode.


\begin{figure}
\center{
\subfigure[]{\includegraphics[scale=0.35]{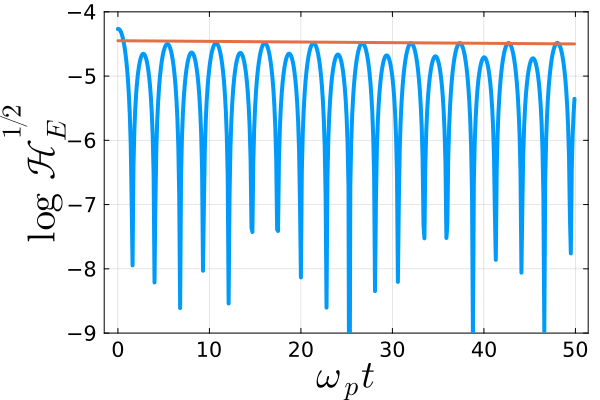}}
\subfigure[]{\includegraphics[scale=0.35]{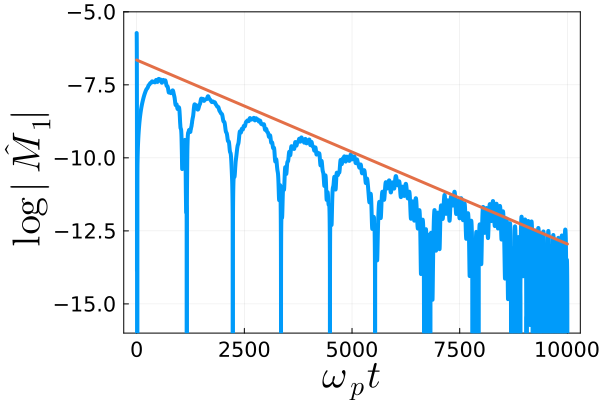}}
\subfigure[]{\includegraphics[scale=0.35]{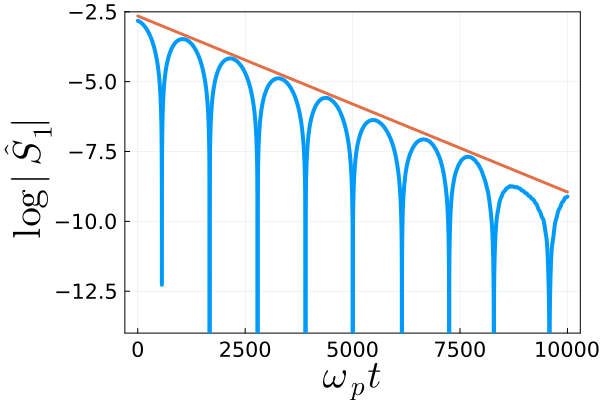}}
}
\caption{TS1 simulation. (a): time evolution of the square root of the electric energy ${\cal H}_E^{1/2}$ (given by \eqref{eq:splitHamiltonianparts_def}) for short times $t\in [0, 50]$. (b): time evolution of the absolute value of the real part of the fundamental mode of the electron spin  $\hat{M}_1$. 
    (c): time evolution of the absolute value of the real part of the fundamental mode of the ion spin  $\hat{S}_1$. 
    The red straight lines have slopes equal to zero for the electric energy and $-6.3 \times 10^{-4}$ for  $\hat{M}_1$ and $\hat{S}_1$. }
\label{TSI_run4}
\end{figure}

\begin{figure}
\center{
\subfigure[]{\includegraphics[scale=0.35]{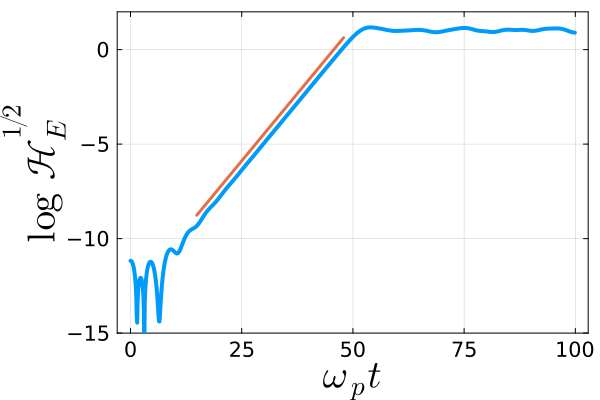}}
\subfigure[]{\includegraphics[scale=0.35]{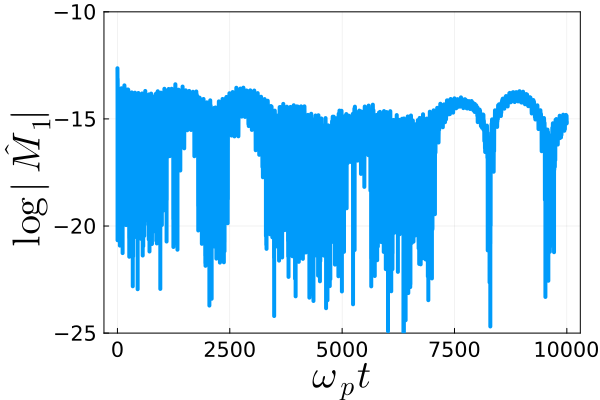}}
\subfigure[]{\includegraphics[scale=0.35]{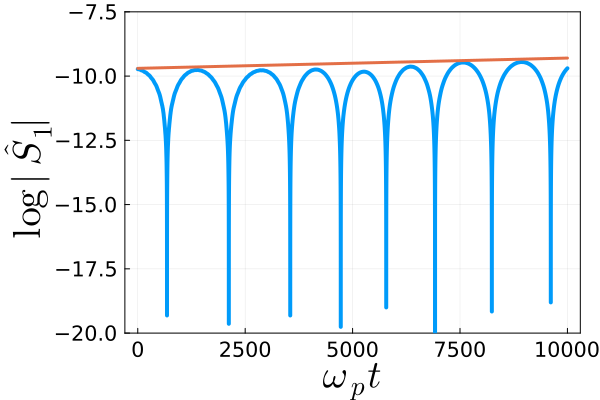}}
}
\caption{TS2 simulation. (a): time evolution of the square root of the electric energy ${\cal H}_E^{1/2}$ (given by \eqref{eq:splitHamiltonianparts_def}) for short times $t\in [0, 100]$. (b): time evolution of the absolute value of the real part of the fundamental mode of the electron spin  $\hat{M}_1$. 
(c): time evolution of the absolute value of the real part of the fundamental mode of the ion spin  $\hat{S}_1$. 
The red straight lines have slopes equal to $0.2845$ for the electric energy and $4\times 10^{-5}$ for  $\hat{S}_1$. }
\label{TSI_run5}
\end{figure}

\begin{figure}
\center{
\subfigure[]{\includegraphics[scale=0.35]{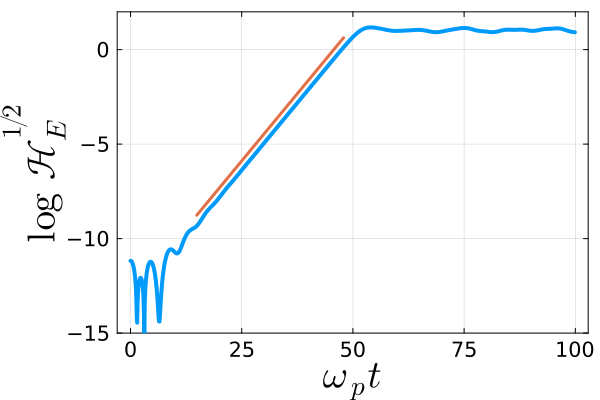}}
\subfigure[]{\includegraphics[scale=0.35]{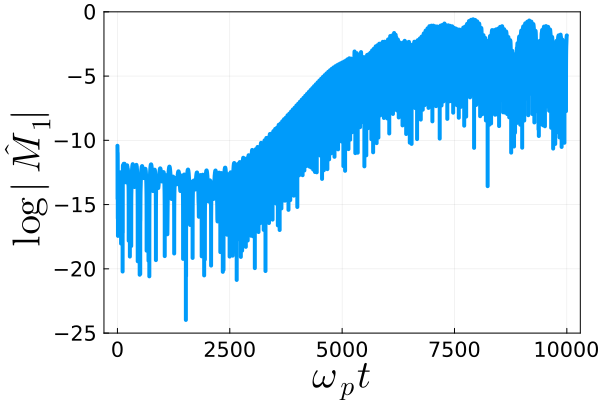}}
\subfigure[]{\includegraphics[scale=0.35]{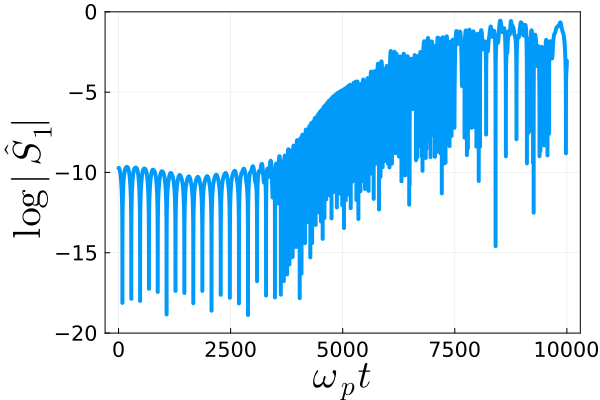}}
}
\caption{TS3 simulation. (a): time evolution of the square root of the electric energy ${\cal H}_E^{1/2}$ (given by \eqref{eq:splitHamiltonianparts_def}) for short times $t\in [0, 100]$; the red straight line has slopes equal to $0.2845$. (b): time evolution of the absolute value of the real part of the fundamental mode of the electron spin  $\hat{M}_1$. 
(c): time evolution of the absolute value of the real part of the fundamental mode of the ion spin  $\hat{S}_1$.  }
\label{TSI_run11}
\end{figure}

\begin{figure}
\center{
\subfigure[]{\includegraphics[scale=0.5]{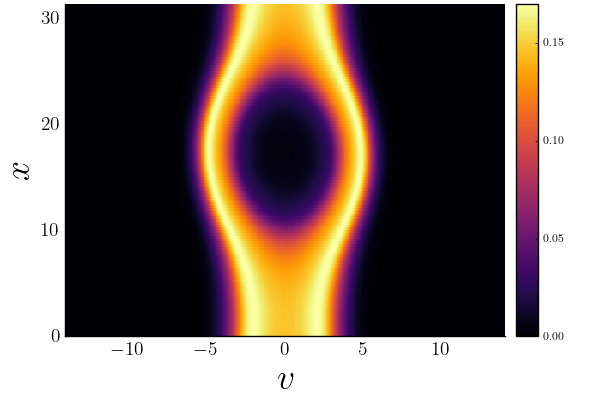}}
\subfigure[]{\includegraphics[scale=0.5]{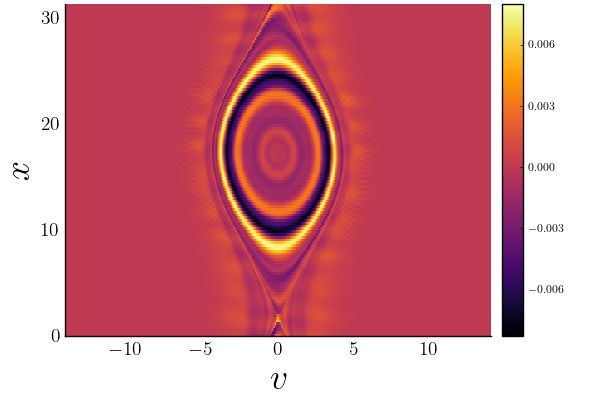}}
\subfigure[]{\includegraphics[scale=0.5]{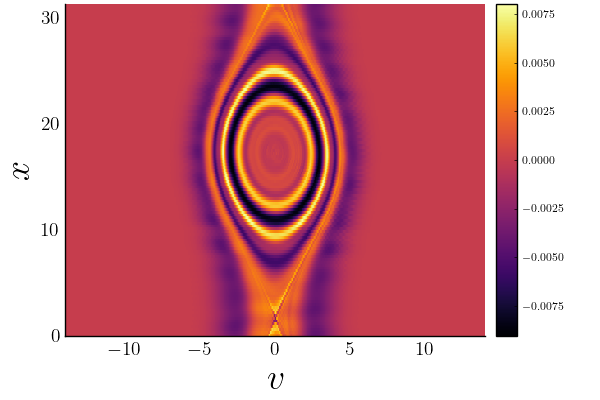}}
\subfigure[]{\includegraphics[scale=0.5]{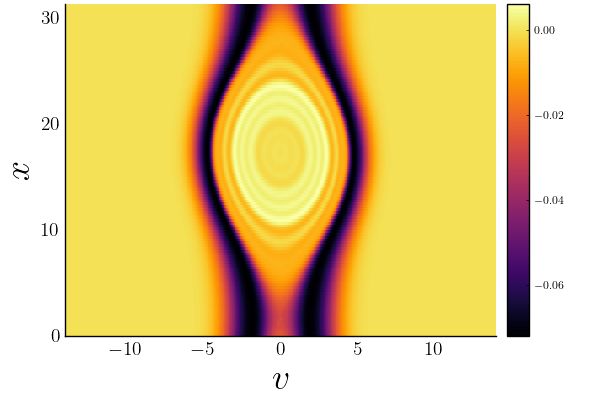}}
}
\caption{TS3 simulation. Contour plots of the distribution functions in the $(x,v)$ phase space at the final time $\omega_p t=10^4$: $f_0$ (a), $f_1$ (b), $f_2$ (c), and  $f_3$ (d).   }
\label{fig:phasespace}
\end{figure}

\begin{figure}
\center{
\subfigure[]{\includegraphics[scale=0.35]{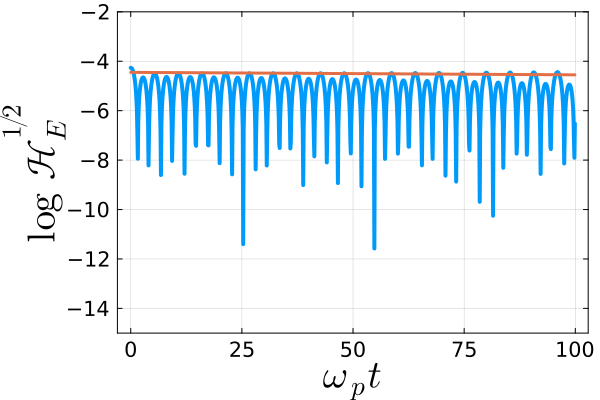}}
\subfigure[]{\includegraphics[scale=0.35]{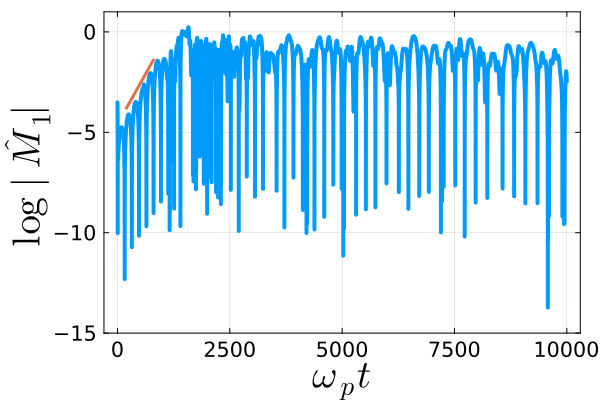}}
\subfigure[]{\includegraphics[scale=0.35]{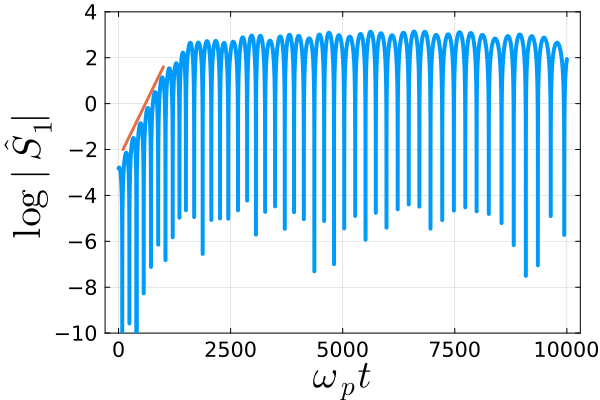}}
}
\caption{TS4 simulation. (a): time evolution of the square root of the electric energy ${\cal H}_E^{1/2}$ (given by \eqref{eq:splitHamiltonianparts_def}) for short times $t\in [0, 100]$. (b): time evolution of the absolute value of the real part of the fundamental mode of the electron spin  $\hat{M}_1$. 
(c): time evolution of the absolute value of the real part of the fundamental mode of the ion spin  $\hat{S}_1$.  The red straight lines have slope equal to $0.004$.}
\label{TSI_run12}
\end{figure}

\paragraph{TS2.}
This run uses the same parameters as TS1, except that the stream velocity is larger, $u=3$. We also changed the magnitude of the initial perturbation, now set to $\varepsilon=10^{-6}$, in order to get a longer-lasting linear phase. Linear theory predicts an instability in the charge sector, with growth rate equal to $0.2845$, which is confirmed by the numerical data shown in figure \ref{TSI_run5}(a).
The ion spin sector displays a very weak instability, with an observed growth rate $\approx 4\times 10^{-5}$.  The electron spin remains at very low amplitude all along the simulation time.

\paragraph{TS3.}
Here, we wish to consider a case where an instability is expected {\em both} in the charge and in the spin sectors. Therefore, we take the same value $u=3$ for the stream velocity as in TS2, and an electron spin polarization $\eta=-0.5$, which led to the instability of the magnon mode in MB3.
The results are plotted in figure \ref{TSI_run11} and show that the evolution of the electric field is almost the same as in the case TS2. This is natural, as the linear response of the charge sector is independent of $\eta$ (nonetheless, one may have expected some differences after the nonlinear regime is attained, around $\omega_p t= 50$, but in practice the two curves are very similar, although not identical).
Interestingly, the electron $\hat{M}_1$ and ion $\hat{S}_1$ spins are initially stable until $\omega_p t \approx 2500$, i.e., well into the nonlinear regime, and only become unstable later. Their growth rate is much smaller than the one associated with the charges.

The phase space portraits at the end of the simulation are displayed in figure \ref{fig:phasespace}, for the four distributions $f_0(t, x, v)$ and $f_\ell(t, x, v),\, \ell=1, 2, 3$. Typically for this type of instability, the two-stream structure has been destroyed in the nonlinear regime and a single vortex centered at $v=0$ can be observed. The vortex is present not only in the charge distribution $f_0$, but also in the spin distributions ${\mathbf f}$.

\paragraph{TS4.}
Finally, we repeat the same simulation  as TS3, but for a smaller stream velocity $u=1.4$, so that there is no instability in the charge sector (see figure \ref{TSI_run12}). In this case, the usual magnon instability ($\eta<0$) develops immediately, in contrast to the preceding TS3 case. Although it is difficult to draw definite conclusions, it is clear that the onset, or otherwise, of a charge instability interacts strongly with the development of a magnon instability. This is  further evidence that the charge and spin sectors are closely intertwined and need to be both included in the model for an accurate description of the magnonic dynamics.

\section{Conclusion} \label{section:conclusion}
In this work, we have built on previous developments 
\cite{Crouseilles2021,Crouseilles2023,Manfredi2023} to construct a fully kinetic 1D model of the interaction between the charge and the spin dynamics in a material with intrinsic magnetization (ferromagnet). The electron dynamics is described by a four-component phase space distribution function $f_0(t, x, v)$, $f_\ell(t, x, v), \ell=1, 2, 3$, where $f_0$ is related to the electron charge and $f_\ell$ to the electron spin polarization in the $\ell$ direction. The fixed ions are modeled by the Landau-Lifshitz equation for the magnetization ${\mathbf S}(t,x)$. The electron charges interact through the self-consistent electric field, solution of the Poisson equation. The electron and ion spins interact through the magnetic exchange, whose magnitude is controlled by the coupling constant $K$. Finally, the ion spins interact among themselves via the ion-ion magnetic exchange, with coupling constant $J$.

This model can be seen as an extension of the standard Vlasov-Poisson equations for mobile electrons and fixed ions, taking into account the electron spin and allowing for a spin dynamics for the ions.

We first focused on the linear response of this system when the equilibrium is a Maxwell-Boltzmann function. The full dispersion relation is rather complex, but can be split into a charge sector and a spin sector. The former is independent of the spin and leads to the standard Bohm-Gross relation. 
The spin sector was analyzed more in detail, particularly the occurrence of damping and instability when the ion-electron magnetic coupling constant and the electron spin polarization at equilibrium are varied. 
Interestingly, we observe damping when the electron spin polarization is directed along their "natural" direction of magnetization (the one dictated by the magnetic field generated by the ions) and instability when it is directed opposite to it.

Next, we built a computational code based on the Hamiltonian splitting method first developed in Ref. \cite{Crouseilles2023, laser2020}. This is an Eulerian grid-based method that solves simultaneously the coupled Vlasov-Poisson-Landau-Lifshitz equations. This technique allowed us to achieve great accuracy for the conserved quantities: the modulus of the ion spin vector $\|{\bf S}(t, x)\|$ is preserved up to machine accuracy and the (relative) total energy is preserved up to $\approx 10^{-7}$.  

We have used the code to validate the estimations of the linear response theory, with very good agreement between the two approaches for Maxwell-Boltzmann equilibria. We also tested it on two-stream equilibria, which may lead to instability in the charge sector, depending on the streams' relative velocities. Particularly interesting was the case where an instability in the charge sector leads to a much delayed instability in the spin sector, which develops well after the charge dynamics has saturated nonlinearly. This is further evidence of the close interaction between the charge and spin sectors in the coupled plasmon-magnon dynamics.

The Maxwell-Boltzmann equilibria and parameter range used in this work, with densities close to those of solids ($\approx 10^{29} \rm m^{-3}$) and temperatures of the order of 10~eV, are relevant to the warm dense matter (WDM) regime \cite{Bonitz2020} that appears, among others, in inertial fusion experiments.
For these conditions, the electron plasma is weakly degenerate ($T_e \approx T_F$), so that it can be characterized with relatively good accuracy by a MB distribution. The ions are fixed and non-degenerate. In this WDM regime, ultrafast nonequilibrium dynamics has been recently observed  thanks to subpicosecond laser pulses \cite{Falk2018}. At these very short timescales, and for magnetic materials, the electron and ion spin polarization may not yet be lost, and impact the early instants of the dynamics.

However, MB distributions are not relevant to condensed-matter systems -- for which the Fermi temperature is well above the room temperature -- and the latter should therefore be described by a Fermi-Dirac (FD) equilibrium. Calculations of the dispersion relation for FD distributions are notoriously more involved than for MB distributions, particularly in the finite-temperature case. These developments are left for future work.

\section*{Acknowledgments}
This work is supported by France 2030 government investment plan managed by the French National Research Agency under grant reference PEPR SPIN -- [SPINTHEORY] ANR-22-EXSP-0009.
This work was partially funded by the French National Research Agency (ANR) through the Programme d'Investissement d'Avenir under contract ANR-11-LABX-0058-NIE and ANR-17-EURE-0024 within the Investissement d'Avenir program ANR-10-IDEX-0002-02.

\newpage

\appendix

\section{Appendix: Spin-polarized equilibrium}
\label{appendix:equilibrium}

To compute stationary states, it is more convenient to go back to the standard representation of the Wigner function \cite{Manfredi2019}:
\begin{equation}
\mathcal{F} = 
\begin{pmatrix}
    \mathcal{F}_{++}& \mathcal{F_{+-}} \\
     \mathcal{F_{-+}} & \mathcal{F_{--}} 
\end{pmatrix} ,
\end{equation}
where $+$ ($-$) stands for spin-up (spin-down) with respect to the direction $\ell=3$. The relationship between this representation and the Pauli representation used in the main text is the following: $f_\ell = {\rm tr} (\sigma_\ell \mathcal{F}), \,  f_0 = {\rm tr} (\mathcal{F})$, 
where $\sigma_\ell\, (\ell=1,2,3)$ are the Pauli matrices:
$$
\sigma_1 = 
\begin{pmatrix}
    0 & 1 \\
     1 & 0
\end{pmatrix}, \quad
\sigma_2 = 
\begin{pmatrix}
    0 & -i \\
     i & 0
\end{pmatrix}, \quad
\sigma_3 = 
\begin{pmatrix}
    1 & 0 \\
     0 & -1
\end{pmatrix}.
$$

For a spatially homogeneous equilibrium, the terms corresponding to the self-consistent electric energy $\mathcal{H}_{E}$ and the spin energy $\mathcal{H}_{spin}$ vanish from the expression of the Hamiltonian \eqref{hamiltonian_dim}. 
In the above basis, the Hamiltonian is a diagonal $2 \times 2$ matrix ${\rm diag}( \mathcal{H}_+, \mathcal{H}_{-}) $, 
where ${\cal H}_\pm = \frac{m}{2} v^2 \pm \mu_B B_3$ is the signed sum of the kinetic and Zeeman energies and $B_3$ is the magnetic field generated by the (fully polarized ions), see \eqref{eq:magn-field-ions}. In our dimensionless units $B_3 = - \widetilde{K}/2$, and we get for the Hamiltonian: ${\cal H}_\pm = v^2\mp H \widetilde{K}$.

For a stationary state, the distribution function must be a function of the Hamiltonian, i.e., in the Maxwell-Boltzmann case, $\mathcal{F} = C\,\exp(-\beta \mathcal{H})$, where $C$ is a normalization constant. Hence, the distribution function is also diagonal, with $\mathcal{F}_{++} = C\,\exp(-\beta \mathcal{H}_{+})$, and similarly for $\mathcal{F}_{--}$, where $\beta = 1/(k_B T_e) = 1$ in our units.

Going back to the Pauli basis utilized in the main text, we obtain
\begin{eqnarray*}
f_0(v) &=& {\cal F}_{++} + {\cal F}_{--} = 2C\, e^{-v^2} \cosh(H\widetilde{K}),  \nonumber\\
f_3(v) &=& {\cal F}_{++} - {\cal F}_{--} = 2C\, e^{-v^2} \sinh(H\widetilde{K}) .
\end{eqnarray*}
With the normalization $\int f_0(v)dv=1$, we get $C=1/(2\sqrt{\pi}\cosh(H\widetilde{K}))$. 
As a consequence, the equilibrium distribution function becomes:
$$
f_0(v) = \frac{e^{-v^2}}{\sqrt{\pi}}, \quad f_3(v) = \eta \,\frac{e^{-v^2}}{\sqrt{\pi}} ,
$$
with $\eta=\tanh(H\widetilde{K})$, which is identical to equation \eqref{eq:MBequilibrium} in the main text.

Finally, if the magnetic field in the Hamiltonians ${\cal H}_\pm$ is not the one generated self-consistently by the ions, but instead an external one $B_3^{\rm ext}$, then the electron spin polarization is $\eta=\tanh(2\mu_B B_3^{\rm ext}/ k_B T_e)$ and can take any values in $[-1,1]$.
Note that $\eta>0$ corresponds to a case where the ion spin $\mathbf{S}(t,x)$ and electrons spin $\mathbf{M}(t,x) =  \frac{\hbar}{2} \int \mathbf{f}(t, x, v) dv$ are aligned along the same direction, which is a stable ferromagnetic equilibrium. In contrast, when $\eta<0$, the ion and electron spins point into opposite directions, leading to an unstable equilibrium. This is confirmed by the simulations reported in section \ref{subsec:MB}.

\section{Appendix: Dispersion relation details}\label{appendix:dispersion}
In this Appendix, some details are given about the analytical dispersion relation. In particular, a new form of the dispersion function $D(\omega,k)$ is presented and its derivatives are computed explicitly.

\subsection{Alternative form of the dispersion relation}
\label{appendix:compact_dispersion_relation}
The dispersion relation \eqref{dispersion} writes as:
\begin{align*}
D_{{S}}(\omega,k)&=& -\left[\omega + \frac{c_0}{k} \left[Z\left(\frac{\omega+\widetilde{K}/2}{k}\right)+Z\left(\frac{\omega-\widetilde{K}/2}{k}\right)\right]-c_1 \left[Z'\left(\frac{\omega-\widetilde{K}/2}{k}\right)-Z'\left(\frac{\omega+\widetilde{K}/2}{k}\right)\right]\right]^2\\
&&\hspace{-1.2cm}+\left[A k^2+d +\frac{c_0}{k}  \left[Z\left(\frac{\omega+\widetilde{K}/2}{k}\right)-Z\left(\frac{\omega-\widetilde{K}/2}{k}\right)\right]+ c_1 \left[Z'\left(\frac{\omega-\widetilde{K}/2}{k}\right)+Z'\left(\frac{\omega+\widetilde{K}/2}{k}\right)\right]\right]^2
\end{align*}with $c_0=\widetilde{K}^2\eta /16$,  $c_1=\widetilde{K}^2 H /16$, $d=\widetilde{K}\eta/4 $ (recall that $\eta=\int f_3^{(0)} \mathrm{d}v$). Factorizing leads to
\begin{align*}
    D_{{S}}(\omega,k)=-&\left[\omega+Ak^2+d+\frac{2c_0}{k}Z\left(\frac{\omega+\widetilde{K}/2}{k}\right)+2c_1 Z'\left(\frac{\omega+\widetilde{K}/2}{k}\right) \right]
    \\ \times
    &\left[\omega-Ak^2-d+\frac{2c_0}{k}Z\left(\frac{\omega-\widetilde{K}/2}{k}\right)-2c_1 Z'\left(\frac{\omega-\widetilde{K}/2}{k}\right)\right] .
\end{align*}
Naming $D_{+}$ the first term on the right-hand side and $D_{-}$ the second term, $D_-(-\omega^{*},k)$ can be computed, where the asterisk denotes the complex conjugate: 
\begin{align*}
    D_-(-\omega^*,k)&=-\omega^*-Ak^2-d+\frac{2c_0}{k}Z\left(\frac{-\omega^*-\widetilde{K}/2}{k}\right)-2c_1 Z'\left(\frac{-\omega^*-\widetilde{K}/2}{k}\right)
    \\
    &=-\omega^*-Ak^2 -d+\frac{2c_0}{k}Z\left(-\left(\frac{\omega+\widetilde{K}/2}{k}\right)^*\right)-2c_1 Z'\left(-\left(\frac{\omega+\widetilde{K}/2}{k}\right)^{*}\right) .
\end{align*}
Now, some symmetries in $Z(-z^*)$ and $Z'(-z^*)$  can be used \cite{Fried1961}: $Z(-z^*)=-Z^*(z)$ so $Z'(-z^*) = -2\left(1-z^* Z(-z^*)\right)=-2\left(1+z^* Z^*(z)\right)=Z'^*(z)$.

$D_-(-\omega^*,k)$ is finally expressed as:
\begin{align*}
    D_-(-\omega^*,k)&=-\left[\omega+Ak^2+d+\frac{2c_0}{k}Z\left(\frac{\omega+\widetilde{K}/2}{k}\right)+2c_1Z'\left(\frac{\omega+\widetilde{K}/2}{k}\right)\right]^*  =-D_{+}^{*}(\omega,k) .
\end{align*}
Then, we get: $D_S(\omega,k)=D_-(-\omega^*,k)D_-(\omega,k)$. Hence, if $\omega$ satisfies $D_S(\omega,k)=0$ then $D_S(-\omega^*,k)$ also vanishes. Therefore, we will consider 
\begin{equation*}
    D_{-}(\omega,k)=\omega-Ak^2-d+\frac{2c_0}{k}Z\left(\frac{\omega-\widetilde{K}/2}{k}\right)-2c_1 Z'\left(\frac{\omega-\widetilde{K}/2}{k}\right) 
\end{equation*}
as the dispersion relation  instead of $D_S$. Since $Z'(z)=-2(1+zZ(z))$, 
\begin{equation*}
    D_{-}(\omega,k)=\omega-Ak^2-d+4c_1+Z(z)\left[\frac{2c_0}{k}+4c_1 z \right], 
\end{equation*}
with $z=(\omega-\widetilde{K}/2)/k$. 
Using the expressions of $d,c_0,c_1$ in terms of $\widetilde{K}$ we obtain
\begin{equation}
\label{d-_app}
    D_{-}(\omega,k)=\omega-Ak^2-\frac{\widetilde{K}\eta}{4}+\frac{\widetilde{K}^2H}{4}+Z(z)\left[\frac{\widetilde{K}^2\eta}{8k}+\frac{\widetilde{K}^2H}{4}z\right], 
\end{equation}
which can also be interpreted as a function of $(\omega, \widetilde{K})$ for a constant value of $k$. 

\subsection{Computation of the derivatives of $D_{-}$}
\label{appendix:D_differential}
The partial derivatives of $D_{-}(\omega,\widetilde{K})$ 
given by \eqref{d-_app} with respect to $\omega$ and $\widetilde{K}$ can be computed as follows (with $\eta=\tanh(H\widetilde{K}$) and $z =\frac{\omega-\widetilde{K}/2}{k}$) :
\begin{align*}
    \derv{D_-}{\widetilde{K}} &=-\frac{\eta}{4}-\frac{\widetilde{K}H(1-\eta^2)}{4}+\frac{\widetilde{K}H}{2}
    +\frac{\widetilde{K}^2\eta}{8k^2}+\frac{\widetilde{K}^2Hz}{4k}
    \\
    &+Z(z)\left[
    \frac{\widetilde{K}^2\eta z}{8k^2}+\frac{\widetilde{K}^2Hz^2}{4k}
    +\frac{\widetilde{K}\eta}{4k}+\frac{\widetilde{K}^2H(1-\eta^2)}{8k}+\frac{\widetilde{K}Hz}{2}-\frac{\widetilde{K}^2H}{8k}
    \right]
    \\
    \derv{D_-}{\omega} &=1-\frac{\widetilde{K}^2\eta}{4k^2}-\frac{\widetilde{K}^2Hz}{2k}+Z(z)\left[-\frac{\widetilde{K}^2\eta z}{4k^2}-\frac{\widetilde{K}^2Hz^2}{2k}+\frac{\widetilde{K}^2H}{4k}    \right].
\end{align*}

\section{Appendix: Time splitting}
\label{split_app}
In this Appendix, we give the details of the time solution of the different subsystems induced by the Hamiltonian splitting, as detailed in section \ref{section:numericalmethod}. 
{Regarding the space approximation, Fourier spectral methods are used, so that the linear transport operators (for the Vlasov part) and the elliptic operators (for the Poisson equation) reduce to a simple multiplication in the Fourier space. In the velocity direction, the linear transport operators in the Vlasov equations  are approximated by using a semi-Lagrangian method based on finite volumes (see \cite{crouseilles2010conservative} 
for more details). Finally, all the integrals in velocity space are approximated by standard 
rectangle quadratures. }

\subsection{Subsystem for $\mathcal{H}_v$}

The subsystem $\frac{\partial \mathcal{Z}}{\partial t} = \{ \mathcal{Z}, \mathcal{H}_v \}$ associated to $\mathcal{H}_{v} = \frac{1}{2}\int v^2 f_0 \mathrm{d}{ x}\mathrm{d}{v}$ is

\begin{equation}\label{eq:Hv}
\left\{
\begin{aligned}
&\frac{\partial f_0}{\partial t} = \{f_0, \mathcal{H}_{v} \} = -v\frac{\partial f_0}{\partial x} \\
&\frac{\partial f_\ell}{\partial t} = \{f_\ell, \mathcal{H}_{v} \}= -v\frac{\partial f_j}{\partial x},\ \ell=1,2,3 \\
&\frac{\partial {\mathbf S}^{}}{\partial t}  = \{ {\mathbf S}^{}, \mathcal{H}_{v} \} = {\mathbf 0}\\
&\partial_x^2 V_H=\int f_0 \mathrm{d}{ v}-1.
\end{aligned}
\right.
\end{equation}
We denote the initial value as $ (f_0^0(x,v),\mathbf{f}^0(x,v), {\mathbf S}^{0}(x))$ at time $t=0$. The solution at time $t$ of this subsystem can be written explicitly: 
\begin{equation}
	\begin{aligned}
		&f_0(t,x,v)=f_0^0(x-vt,v), \;\;  \mathbf{f}(t,x,v)=\mathbf{f}^0(x-vt,v),  \;\; {\mathbf S}^{}(t,x)={\mathbf S}^{0}(x).  
	\end{aligned}
\end{equation}

\subsection{Subsystem for $\mathcal{H}_E$}

The subsystem $\frac{\partial \mathcal{Z}}{\partial t} = \{ \mathcal{Z}, \mathcal{H}_E \}$ associated to $\mathcal{H}_E=\frac{1}{2} \int \Big(\frac{\partial V_H}{\partial x}\Big)^2 \mathrm{d}{x}$ is

\begin{equation}\label{eq:HE}
\left\{
\begin{aligned}
&\frac{\partial f_0}{\partial t} = \{f_0, \mathcal{H}_{E} \} = -\frac{\partial V_H}{\partial x} \frac{\partial f_0}{\partial v}\\
&\frac{\partial f_\ell}{\partial t} = \{f_\ell, \mathcal{H}_{E} \}= -\frac{\partial V_H}{\partial x} \frac{\partial f_\ell}{\partial v},\ \ell=1,2,3 \\
&\frac{\partial {\mathbf S}^{}}{\partial t}  = \{ {\mathbf S}^{}, \mathcal{H}_{E} \} = {\mathbf 0}.
\end{aligned}
\right.
\end{equation}
With the initial value $(f_0^0(x,v),\mathbf{f}^0(x,v), {\mathbf S}^{0}(x))$ at time $t=0$, 
the solution at time $t$ is as follows
\begin{equation}
	\begin{aligned}
		&f_0(t,x,v)=f_0^0 \left( x,v-t\frac{\partial V_H}{\partial x}(x) \right), \; {\mathbf f}(t,x,v)={\mathbf f}^0 \left( x,v-t\frac{\partial V_H}{\partial x}(x) \right),  \; {\mathbf S}^{}(t,x)={\mathbf S}^{0}(x). 
	\end{aligned}
\end{equation}

\subsection{Subsystem for $\mathcal{H}_{S_1}$}

The subsystem $\frac{\partial \mathcal{Z}}{\partial t} = \{ \mathcal{Z}, \mathcal{H}_{S_1} \}$ associated to $\mathcal{H}_{S_1}= H \int  f_1 B_1  \mathrm{d}x\mathrm{d}v+ AH \int \left(\frac{\partial {S^{}_1}}{\partial x}\right)^2 \mathrm{d}{ x}$ is
\begin{equation}\label{eq:HBS1}
\left\{
\begin{aligned}
&\frac{\partial f_0}{\partial t} = \{f_0, \mathcal{H}_{S_1} \} = H\frac{\partial B_1}{\partial x} \frac{\partial f_1}{\partial v}\\
&\frac{\partial f_1}{\partial t} = \{f_1, \mathcal{H}_{S_1} \} = H\frac{\partial B_1}{\partial x} \frac{\partial f_0}{\partial v} \\
&\frac{\partial f_2}{\partial t} = \{f_2, \mathcal{H}_{S_1} \} = -B_1 f_3 \\
&\frac{\partial f_3}{\partial t} = \{f_3, \mathcal{H}_{S_1} \} = B_1 f_2\\
&\frac{\partial {S}^{}_1}{\partial t}  = \{ {S}_1, \mathcal{H}_{S_1} \} = {0}\\
&\frac{\partial {S}^{}_2}{\partial t}  = \{ {S}_2, \mathcal{H}_{S_1} \} = \frac{\widetilde{K}}{4}S^{}_3 \int f_1 \mathrm{d}{ v} +A {S}^{}_3 \partial^2_x {S}_1\\
&\frac{\partial {S}^{}_3}{\partial t}  = \{ {S}_3, \mathcal{H}_{S_1} \} = -\frac{\widetilde{K}}{4}S^{}_2 \int f_1 \mathrm{d}{ v} -A {S}^{}_2 \partial^2_x {S}^{}_1.
\end{aligned}
\right.
\end{equation}
with the initial value 
$(f_0^0(x,v),{\mathbf f}^0(x,v),{\mathbf S}^{0}(x))$ at time $t=0$ and ${\mathbf B}^0=-\frac{\widetilde{K} {\mathbf S^{0}}}{2}$. By using ${S}^{}_1={S}^{0}_1$, $B_1=B_1^0$ and $\int f_1 \mathrm{d}{ v}=\int f_1^0 \mathrm{d}{ v}$, we reformulate the equations \eqref{eq:HBS1} as
\begin{eqnarray}
\label{HS1_a}
\displaystyle \partial_t	\begin{pmatrix}
	f_0 \\
	f_1
\end{pmatrix}-H \frac{\partial B_1^0}{\partial x} \begin{pmatrix}
	0 & 1 \\
	1 & 0
\end{pmatrix} \partial_v	\begin{pmatrix}
f_0 \\
f_1
\end{pmatrix} &=&0, \\
\label{HS1_b}
\partial_t	\begin{pmatrix}
			f_2  \\
			f_3
		\end{pmatrix}+B_1^0 J \begin{pmatrix}
f_2  \\
f_3
 \end{pmatrix} &=&0,
\\
\label{HS1_c}
\partial_t	\begin{pmatrix}
			{S}^{}_2  \\
			{S}^{}_3
		\end{pmatrix}-\left(\frac{\widetilde{K}}{4}\int f_1^0 \mathrm{d}{ v}+A \partial^2_x {S}^{,0}_1\right) J \begin{pmatrix}
			{S}^{}_2  \\
			{S}^{}_3
		\end{pmatrix} &=& 0,
\end{eqnarray}
where $J$ denotes the symplectic matrix 
$$
J=\begin{pmatrix}
	0 & 1 \\
-1 & 0
\end{pmatrix}.$$ 
By the eigen-decomposition 
 $$\begin{pmatrix}
	\frac{1}{2} & \frac{1}{2} \\
	\frac{1}{2}& -\frac{1}{2}
\end{pmatrix}
\begin{pmatrix}
	0 & 1 \\
	1 & 0
\end{pmatrix}
\begin{pmatrix}
	1 & 1 \\
	1 & -1 
\end{pmatrix}
=\begin{pmatrix}
1 & 0 \\
	0& -1
\end{pmatrix},$$ 
equation \eqref{HS1_a} can diagonalized to get  two transport equations that can be solved exactly in time 
\begin{equation}
\partial_t	\begin{pmatrix}
	\frac{1}{2}f_0+\frac{1}{2} f_1 \\
	\frac{1}{2}f_0-\frac{1}{2}f_1
\end{pmatrix}
-H \frac{\partial B_1^0}{\partial x}
\begin{pmatrix}
	1 & 0 \\
	0& -1
\end{pmatrix} \partial_v	\begin{pmatrix}
	\frac{1}{2}f_0+\frac{1}{2} f_1 \\
	\frac{1}{2}f_0-\frac{1}{2}f_1
\end{pmatrix} =0. 
\end{equation}
The exact solution for \eqref{HS1_b} is  
\begin{equation}
	\begin{pmatrix}
			f_2  \\
			f_3
		\end{pmatrix}(t,x,v)=\exp{\left(- B_1^0 J t\right)}\begin{pmatrix}
f_2^0(x,v)  \\
f_3^0(x,v)
\end{pmatrix},\  \text{with}\ \exp{(Js)}=\begin{pmatrix}
	\cos(s) & \sin(s) \\
-\sin(s) & \cos(s)
\end{pmatrix}.  
\end{equation}  
Similarly, we can get the exact solution for last system \eqref{HS1_c}
\begin{equation}\label{solutionofHS1}
	\begin{pmatrix}
			{S}^{}_2  \\
			{S}^{}_3
		\end{pmatrix}(t,x)=\exp{\left( \left( \frac{\widetilde{K}}{4}\int f_1^0 \mathrm{d}{ v}+A \partial^2_x {S}^{0}_1\right) J t\right)}\begin{pmatrix}
{S}^{0}_2 (x)  \\
			{S}^{0}_3 (x)
\end{pmatrix}.
\end{equation}

\subsection{Subsystem for $\mathcal{H}_{S_2}$}

The subsystem $\frac{\partial \mathcal{Z}}{\partial t} = \{ \mathcal{Z}, \mathcal{H}_{S_2} \}$ associated to $\mathcal{H}_{S_2}= H\int   f_2 B_2  \mathrm{d}x\mathrm{d}v+ AH \int (\frac{\partial {S_2}}{\partial x})^2 \mathrm{d}{ x}$ is

\begin{equation}\label{eq:HBS2}
\left\{
\begin{aligned}
&\frac{\partial f_0}{\partial t} = \{f_0, \mathcal{H}_{S_2} \} = H \frac{\partial B_2}{\partial x} \frac{\partial f_2}{\partial v}\\
&\frac{\partial f_1}{\partial t} = \{f_1, \mathcal{H}_{S_2} \} = B_2 f_3\\
&\frac{\partial f_2}{\partial t} = \{f_2, \mathcal{H}_{S_2} \} = H  \frac{\partial B_2}{\partial x} \frac{\partial f_0}{\partial v} \\
&\frac{\partial f_3}{\partial t} = \{f_3, \mathcal{H}_{S_2} \} = -B_2 f_1 \\
&\frac{\partial {S}^{}_1}{\partial t}  = \{ {S}^{}_1, \mathcal{H}_{S_2} \} = -\frac{\widetilde{K}}{4}S^{}_3 \int f_2 \mathrm{d}{ v} -A {S}^{}_3 \partial^2_x {S}^{}_2\\
&\frac{\partial {S}_2}{\partial t}  = \{ {S}^{}_2, \mathcal{H}_{S_2} \} = 0\\
&\frac{\partial {S}^{}_3}{\partial t}  = \{ {S}^{}_3, \mathcal{H}_{S_2} \} = \frac{\widetilde{K}}{4}S^{}_1 \int f_2 \mathrm{d}{ v} +A {S}^{}_1 \partial^2_x {S}^{}_2.
\end{aligned}
\right.
\end{equation}
with the initial value 
$(f_0^0(x,v),{\mathbf f}^0(x,v),{\mathbf S}^{0}(x))$ at time $t=0$ and ${\mathbf B}^0=-\frac{\widetilde{K} {\mathbf S^{0}}}{2}$. 
This subsystem is very similar to the $\mathcal{H}_{S_1}$ one, hence, 
as was done previously, we reformulate the equations by using ${S}^{}_2={S}^{0}_2$, $B_2=B_2^0$ and $\int f_2 \mathrm{d}{ v}=\int f_2^0 \mathrm{d}{ v}$
\begin{eqnarray}
\label{Hs2_a}
 \partial_t	
 \begin{pmatrix}
	f_0 \\
	f_2
\end{pmatrix}
-H \frac{\partial B_2^0}{\partial x} 
\begin{pmatrix}
	0 & 1 \\
	1 & 0
\end{pmatrix} 
\partial_v	
\begin{pmatrix}
f_0 \\
f_2
\end{pmatrix} &=&0, \\
\label{Hs2_b}
 \partial_t	
 \begin{pmatrix}
			f_1  \\
			f_3
		\end{pmatrix}
		- B_2^0 J 
		\begin{pmatrix}
f_1  \\
f_3
 \end{pmatrix} &=& 0,\\
 \label{Hs2_c}
  \partial_t
 	\begin{pmatrix}
			{S}^{}_1  \\
			{S}^{}_3
		\end{pmatrix}
		+\left(\frac{\widetilde{K}}{4}\int f_2^0 \mathrm{d}{ v}+A \partial^2_x {S}^{0}_2\right) J \begin{pmatrix}
			{S}^{}_1  \\
			{S}^{}_3
		\end{pmatrix} &=&0. 
\end{eqnarray}
As in the step ${\mathcal H}_{s_1}$, 
we have two transport equations from \eqref{Hs2_a} that can be solved exactly 
\begin{equation}
\partial_t	\begin{pmatrix}
	\frac{1}{2}f_0+\frac{1}{2} f_2 \\
	\frac{1}{2}f_0-\frac{1}{2}f_2
\end{pmatrix}
-H  \frac{\partial B_2^0}{\partial x}
\begin{pmatrix}
	1 & 0 \\
	0& -1
\end{pmatrix} \partial_v	\begin{pmatrix}
	\frac{1}{2}f_0+\frac{1}{2} f_2 \\
	\frac{1}{2}f_0-\frac{1}{2}f_2
\end{pmatrix} =0.
\end{equation}
Moreover, the exact solutions for the systems \eqref{Hs2_b} and \eqref{Hs2_c} are respectively
\begin{equation}
	\begin{pmatrix}
			f_1  \\
			f_3
		\end{pmatrix}(t,x,v)=\exp{\left( B_2^0 J t\right)}\begin{pmatrix}
f_1^0(x,v)  \\
f_3^0(x,v)
\end{pmatrix},\  \text{with}\ \exp{(Js)}=\begin{pmatrix}
	\cos(s) & \sin(s) \\
-\sin(s) & \cos(s)
\end{pmatrix}.  
\end{equation}  
and
\begin{equation}\label{solutionofHS2}
	\begin{pmatrix}
			{S}^{}_1  \\
			{S}^{}_3
		\end{pmatrix}(t,x)=\exp{\left(- \left( \frac{\widetilde{K}}{4}\int f_2^0 \mathrm{d}{ v}+A \partial^2_x {S}^{0}_2\right) J t\right)}\begin{pmatrix}
{S}^{0}_1 (x)  \\
			{S}^{0}_3 (x)
\end{pmatrix}.
\end{equation} 

\subsection{Subsystem for $\mathcal{H}_{S_3}$}

The subsystem $\frac{\partial \mathcal{Z}}{\partial t} = \{ \mathcal{Z}, \mathcal{H}_{S_3} \}$ associated to $\mathcal{H}_{S_3}= H \int   f_3 B_3  \mathrm{d}x\mathrm{d}v+ AH \int (\frac{\partial {S^{}_3}}{\partial x})^2 \mathrm{d}{ x}$ is

\begin{equation}\label{eq:HBS3}
\left\{
\begin{aligned}
&\frac{\partial f_0}{\partial t} = \{f_0, \mathcal{H}_{S_3} \} = H\frac{\partial B_3}{\partial x} \frac{\partial f_3}{\partial v}\\
&\frac{\partial f_1}{\partial t} = \{f_1, \mathcal{H}_{S_3} \} = -B_3 f_2 \\
&\frac{\partial f_2}{\partial t} = \{f_2, \mathcal{H}_{S_3} \} = B_3 f_1 \\
&\frac{\partial f_3}{\partial t} = \{f_3, \mathcal{H}_{S_3} \} = H \frac{\partial B_3}{\partial x} \frac{\partial f_0}{\partial v} \\
&\frac{\partial {S}^{}_1}{\partial t}  = \{ {S}^{}_1, \mathcal{H}_{S_3} \} = \frac{\widetilde{K}}{4} S^{}_2 \int f_3 \mathrm{d}{ v} +A {S}^{}_2 \partial^2_x {S}^{}_3\\
&\frac{\partial {S}^{}_2}{\partial t}  = \{ {S}^{}_2, \mathcal{H}_{S_3} \} = -\frac{\widetilde{K}}{4}S^{}_1 \int f_3 \mathrm{d}{ v} -A {S}^{}_1 \partial^2_x {S}^{}_3\\
&\frac{\partial {S}^{}_3}{\partial t}  = \{ {S}^{}_3, \mathcal{H}_{S_3} \} = 0.
\end{aligned}
\right.
\end{equation}
with the initial value 
$(f_0^0(x,v),{\mathbf f}^0(x,v),{\mathbf S}^{0}(x))$ at time $t=0$.
This subsystem is also very similar to the $\mathcal{H}_{{S_1}}$ one, hence, 
as was done previously, we reformulate the equations by using ${S}^{}_3={S}^{0}_3$, $B_3=B_3^0$ and $\int f_3 \mathrm{d}{ v}=\int f_3^0 \mathrm{d}{ v}$. The update of $(f_0, f_3)$ 
is performed by solving the following transport equation 
\begin{equation}
\partial_t	\begin{pmatrix}
	\frac{1}{2}f_0+\frac{1}{2} f_3 \\
	\frac{1}{2}f_0-\frac{1}{2}f_3
\end{pmatrix}
-H \frac{\partial B_3^0}{\partial x}
\begin{pmatrix}
	1 & 0 \\
	0& -1
\end{pmatrix} \partial_v	\begin{pmatrix}
	\frac{1}{2}f_0+\frac{1}{2} f_3 \\
	\frac{1}{2}f_0-\frac{1}{2}f_3
\end{pmatrix} =0.
\end{equation}
The exact solution for $(f_1, f_2)$ is 
\begin{equation}
	\begin{pmatrix}
			f_1  \\
			f_2
		\end{pmatrix}(t, x,v)=\exp{\left(-B_3^0 J t\right)}\begin{pmatrix}
f_1^0(x,v)  \\
f_2^0(x,v)
\end{pmatrix},\  \text{with}\ \exp{(Js)}=\begin{pmatrix}
	\cos(s) & \sin(s) \\
-\sin(s) & \cos(s)
\end{pmatrix}.  
\end{equation}  
and for $(S_1, S_2)$ we have 
\begin{equation}\label{solutionofHS3}
	\begin{pmatrix}
			{S}^{}_1  \\
			{S}^{}_2
		\end{pmatrix}(t,x)=\exp{\left( \left( \frac{\widetilde{K}}{4}\int f_3^0 \mathrm{d}{ v}+A \partial^2_x {S}^{0}_3\right) J t\right)}\begin{pmatrix}
{S}^{0}_1 (x)  \\
			{S}^{0}_2 (x)
\end{pmatrix}.
\end{equation}

\newpage

\bibliographystyle{abbrv}
\bibliography{biblio1}

\end{document}